\newtheorem{theorem}{Theorem}
\newtheorem{lemma}{Lemma}
\title{The GDOF of 3-user MIMO Gaussian interference channel}
\author{Jung Hyun Bae, Jungwon Lee, Inyup Kang\\
Mobile Solutions Lab\\
Samsung US R$\&$D Center\\
San Diego, CA, USA\\
Email: jungbae@umich.edu, jungwon@alumni.stanford.edu, inyup.kang@samsung.com}
\begin{document}
\maketitle
\begin{abstract}
The paper establishes the optimal generalized degrees of freedom (GDOF) of 3-user $M \times N$ multiple-input multiple-output (MIMO) Gaussian interference channel (GIC) in which each transmitter has $M$ antennas and each receiver has $N$ antennas. A constraint of $2M \leq N$ is imposed so that random coding with message-splitting achieves the optimal GDOF. Unlike symmetric case, two cross channels to unintended receivers from each transmitter can have different strengths, and hence, well known Han-Kobayashi common-private message splitting would not achieve the optimal GDOF. Instead, splitting each user's message into three parts is shown to achieve the optimal GDOF. The capacity of the corresponding deterministic model is first established which provides systematic way of determining side information for converse. Although this deterministic model is philosophically similar to the one considered by Gou and Jafar, additional constraints are imposed so that capacity description of the deterministic model only contains the essential terms for establishing the GDOF of Gaussian case. Based on this, the optimal GDOF of Gaussian case is established with $\mathcal{O}(1)$ capacity approximation. The behavior of the GDOF is interestingly different from that of the corresponding symmetric case. Regarding the converse, several multiuser outer bounds which are suitable for asymmetric case are derived by non-trivial generalization of the symmetric case.
\end{abstract}
\section{Introduction}
\label{sec:intro}
Interference plays a central role in today's wireless communications systems. In information theory, the efforts of finding the performance limit of interference channel (IC) in terms of capacity started more than 30 years ago~\cite{Ca78,Sa81,HaKo81}. Unfortunately, the complete capacity region for even a simple 2-user IC is known only for \textit{strong} interference regime~\cite{Ca78,Sa81,HaKo81}. \\
\indent
Although the problem of finding the exact capacity has been open for more than 30 years, the notion of \textit{degrees of freedom} (DOF) defined for high signal-to-noise ratio (SNR) has opened a new direction of understanding IC. One surprising result was obtained by Cadambe and Jafar~\cite{CaJa08} which states that the per-user DOF of $K$-user IC is the same as that of 2-user IC for arbitrary $K$, which seems counter-intuitive given the fact that more users would result in more overall interference in the system. The DOF provides valuable understanding of IC with a form of conclusive answer, but it does not capture the relationship between signal strength and interference strength which has crucial importance in understanding IC. \\
\indent
In~\cite{EtTsWa08}, Etkin \textit{et al.} came up with the notion of the generalized DOF (GDOF) which incorporates signal-to-interference ratio (SIR) in it. As the DOF does, the GDOF also assumes high SNR, and this not only makes analysis more tractable, but also provides a valuable viewpoint in understanding IC. In IC, there are two important factors which are background noise and interference. Although their combined effect likely needs to be studied thoroughly for complete understanding of IC, one may want to isolate the effect of interference given that the effect of background noise has been fairly well studied through point-to-point (p2p) channel analysis. High SNR regime can essentially be considered as \textit{interference-limited} regime, and thus provides such isolation. It turns out that the GDOF provides tremendous insight on 2-user single-input single-output (SISO) Gaussian IC (GIC) through its so called `W' shape, and rather surprising 1-bit gap to the capacity result is also given in~\cite{EtTsWa08}.  \\
\indent
An important observation made in~\cite{EtTsWa08} is that a simple version of the Han-Kobayashi (HK) scheme~\cite{HaKo81} turns out to be the GDOF optimal. Intuition behind why the HK scheme is the GDOF optimal for 2-user SISO GIC can be found through its deterministic modeling which was originally studied by El Gamal and Costa~\cite{GaCo82}. Simply speaking, deterministic modeling assumes non-random noise or deterministic loss of transmitted signal which the transmitter is aware of. Therefore, the optimal strategy of the transmitter is easily given by not transmitting any valuable data on the part of the signal which will be lost. This strategy is indeed a special case of the HK scheme, and it is shown to achieve the capacity of this 2-user deterministic model. An important assumption in~\cite{GaCo82} for capacity achievability is that common information of interference must be clearly observable after decoding the intended message. This assumption is discussed in Section VII of~\cite{EtTsWa08}, and it will also be discussed later in this paper. By this assumption, a class of multiple-input multiple-output (MIMO) IC in which the HK scheme must be the GDOF optimal can be characterized. Gou and Jafar~\cite{GoJa11} found the optimal GDOF of a certain class of single-input multiple-output (SIMO) IC by extending the deterministic model of~\cite{GaCo82}. Corresponding MIMO results are obtained in~\cite{PaBlTa08, KaVa11, KaVa11-1, MoMu11}. \\
\indent 
For cases in which the HK scheme is not GDOF optimal, the optimal GDOF was found by using `signal-level alignment'~\cite{BrPaTs10, JaVi10}. For these cases, we may think of a specific form of deterministic modeling for Gaussian channels which are proposed in~\cite{BrTs08}. Although this `signal-level alignment' can possibly provide a valuable way of solving more general cases, it can only be applied for SISO symmetric cases so far. Extending this to general cases still remains to be seen. \\
\indent
One thing to note is that the aforementioned GDOF results only deal with symmetric IC except for 2-user results in \cite{EtTsWa08, KaVa11, KaVa11-1}. Since aforementioned assumption of clearly observable interference has nothing to do with symmetric nature of the channel, it is reasonable to believe that there must be a kind of message-splitting with random coding schemes which achieves the optimal GDOF of asymmetric IC, and this is the main focus of this paper. \\
\indent
A simpler case than asymmetric MIMO GIC was considered in~\cite{BrPaTs10}. In~\cite{BrPaTs10}, one-to-many IC was considered in which one transmitter causes interferences to all receivers, and all the other transmitters do not cause interference. In this channel, a generalization of the HK scheme which splits the message into multiple layers achieves a constant gap to the capacity. At the transmitter's view point, this one-to-many channel is equivalent to the channel considered in this paper. In this paper, therefore, we use this generalization of the HK scheme and show that it achieves the optimal GDOF of 3-user \textit{partially asymmetric} MIMO GIC. The reason why multiple splitting is necessary is because cross channels for a given transmitter have different channel qualities unlike symmetric case.\\
\indent
Finding the optimal GDOF involves derivation of tight-enough upper bounds. In~\cite{EtTsWa08}, a technique of giving appropriate side information is developed to derive such upper bounds. As mentioned in~\cite{GoJa11}, appropriate side information can easily be determined through deterministic modeling for certain cases. One thing to note is that the capacity region of the deterministic model given in~\cite{GoJa11} is much more complicated than the GDOF of the corresponding Gaussian model. For efficient computation, we propose more specific form of deterministic model which is closer to the corresponding Gaussian model for the GDOF analysis. By using this, the minimal number of tight upper bounds with appropriate side information can easily be determined. It will be seen that new type of upper bounds emerge, and they are non-trivial generalization of the symmetric case.\\
\indent
The remainder of this paper is organized as follows. Section~\ref{sec:ch} defines the channel model as well as achievable rate terms. Section~\ref{sec:det} provides analysis on the symmetric capacity of the deterministic models which correspond to the GDOF of Gaussian case. Section~\ref{sec:gau} establishes the optimal GDOF of the 3-user MIMO GIC. Section~\ref{sec:con} concludes the paper. 
\paragraph{Notation}
A matrix is represented with a capital letter like $X$, and a vector is represented as $\underline{x}$. $I$ represents an identity matrix or mutual information, and they can be easily differentiated from the context. For a matrix $X$ or a vector $\underline{x}$, $X^H$ or $\underline{x}^H$ represents conjugate transpose. $Tr(X)$ represents the trace of $X$. 
\section{Channel model and preliminaries}
\label{sec:ch}
Consider a following model with channel output $\underline{y}_i$ for the receiver $i$, channel input $\underline{x}_i$ for the transmitter $i$, and the channel $H_{ij}$ from the transmitter $i$ to the receiver $j$.
\begin{eqnarray}
\label{eq:ch}
\underline{y}_1&=&\rho H_{11}\underline{x}_1 + \rho^{\alpha_2} H_{21}\underline{x}_2 + \rho^{\alpha_1} H_{31}\underline{x}_3 + \underline{z}_1 \nonumber \\
\underline{y}_2&=&\rho^{\alpha_1} H_{12}\underline{x}_1 + \rho H_{22}\underline{x}_2 + \rho^{\alpha_2} H_{32}\underline{x}_3 + \underline{z}_2 \nonumber \\
\underline{y}_3&=&\rho^{\alpha_2} H_{13}\underline{x}_1 + \rho^{\alpha_1} H_{23}\underline{x}_2 + \rho H_{33}\underline{x}_3 + \underline{z}_3,
\end{eqnarray}
where background noise $\underline{z}_i \sim \mathcal{CN}(\underline{0},I)$ and $\rho>0, \alpha_1>\alpha_2>0$. $\underline{x}_i$ satisfies the average power constraint $Tr(E[\underline{x}_i\underline{x}_i^H]) \leq Tr(I)$. We consider $M \times N$ MIMO channel in which each transmitter has $M$ antennas and each receiver has $N$ antennas. Although every result obtained in this paper with 3 user can be directly generalized into $K$-user case, we only consider 3 user case in this paper due to computational complexity. We call the above model \textit{partially asymmetric} due to its symmetric nature that every transmitter sees channels with strengths $\rho, \rho^{\alpha_1}, \rho^{\alpha_2}$ and every receiver sees channels with strengths $\rho, \rho^{\alpha_1}, \rho^{\alpha_2}$. Again, a general asymmetric case is essentially no different from this partially asymmetric case, but we do not consider a general model due to complexity. Because of the symmetric nature of the channel, the achievable GDOF can be characterized by a single number as in the fully symmetric case while the asymmetric nature is enough to capture essential difference from the fully symmetric case. We assume that there is no degenerate case of channel coefficients, i.e., all $H_{ij}$'s are full-rank. We define the capacity region $\mathcal{C}$ of this channel in the standard Shannon sense. Because of symmetry, the maximum achievable total GDOF of the system is attained when the rate of each user is the same. Therefore, we define the symmetric capacity as 
\begin{equation}
C_{sym}= \max_{(R_1,R_2,R_3)\in \mathcal{C}} \min\{R_1, R_2, R_3\},
\end{equation}
where $R_i$ is the rate of user $i$. We may define $C_{sym}$ as a function of $\rho$, $\alpha_1$ and $\alpha_2$. Then, per user GDOF $d_{sym}(\alpha_1,\alpha_2)$ is given as 
\begin{equation}
d_{sym}(\alpha_1,\alpha_2)= \lim_{\rho \rightarrow \infty} \frac{C_{sym}(\rho,\alpha_1,\alpha_2)}{\log_2 \rho}.
\end{equation}
To satisfy the assumption of clearly observable interference, we only consider the case where $2M\leq N<3M$. This will be discussed in more detail in Section~\ref{sec:det}. 
\section{Deterministic modeling}
\label{sec:det}
\subsection{Case of $\alpha_1<1$}
Deterministic modeling gives an insight for a corresponding Gaussian model with simpler analysis. As mentioned earlier, one of the most important benefits of deterministic modeling is systematic determination of necessary side information. Figure~\ref{fig:det} shows the deterministic model corresponding to the channel defined in Section~\ref{sec:ch} with $\alpha_1<1$. There are two other possible cases of $\alpha_1>1>\alpha_2$ and $\alpha_2>1$, and slightly different deterministic models from one in Figure~\ref{fig:det} need to be considered for those cases. 
\begin{figure}[h]
\begin{center}{
 \includegraphics[width=0.5\textwidth]{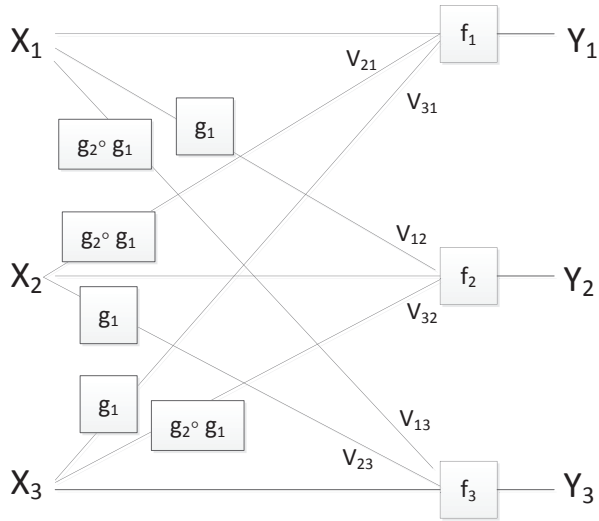}  
 }\end{center}
  \caption{Deterministic model of partially asymmetric IC}
  \label{fig:det}
\end{figure}
$V_{ij}$ is interference from the transmitter $i$ to the receiver $j$, $Y_i$ is channel output at the receiver $i$, and $X_i$ is channel input from the transmitter $i$, which are given as
\begin{subequations} 
\label{eq:det1}
\begin{eqnarray}
V_{12}&=&g_1(X_1), \quad V_{13}=g_2\circ g_1(X_1),\\
V_{21}&=&g_2\circ g_1(X_2), \quad V_{23}=g_1(X_2),\\
V_{31}&=&g_1(X_3), \quad V_{32}=g_2\circ g_1(X_3),\\
Y_1&=&f_1(X_1,V_{21},V_{31}),\\
Y_2&=&f_2(X_2,V_{12},V_{32}),\\
Y_3&=&f_3(X_3,V_{13},V_{23}),
\end{eqnarray}
\end{subequations}
where $f_i$ and $g_i$ are deterministic functions. The term `deterministic' comes from this property of channel functions especially $g_i$. This is similar to what is defined in~\cite{GoJa11}, but an important difference is that the functions representing two interference channels from one transmitter are different, which reflects asymmetric nature of the channel. In~\cite{GoJa11}, functions representing interference channels from one transmitter are different from those from the other transmitters, and in that sense, the model in~\cite{GoJa11} is more general than the one described in Figure~\ref{fig:det}. Note that a model with the same functions from all the transmitters would have been enough to show the intended results of Gaussian case in~\cite{GoJa11}, and we only consider this simpler model which is enough due to symmetry of the channel. Another difference from~\cite{GoJa11} is that one interference from each transmitter is a degraded version of the other interference. Because of degraded nature of Gaussian interference channel which is explained in~\cite[Ch. 15.6.3]{CoTh06} for broadcast channel, this deterministic model is sufficient to reflect the GDOF behavior of Gaussian model. If we consider a more general deterministic model, then superposition coding which is enabled by degraded nature of the channel would likely be insufficient to achieve the capacity, and the resulting capacity achieving scheme, if possible to find, would look quite different from the GDOF achieving scheme for Gaussian IC. For that reason, we only consider a degraded deterministic model. An important property which needs to be satisfied to show that the HK-like scheme achieves the capacity is given as
\begin{subequations} 
\label{eq:intdec}
\begin{eqnarray}
H(Y_1|X_1)&=&H(V_{21},V_{31})=H(V_{21})+H(V_{31}),\\
H(Y_2|X_2)&=&H(V_{12},V_{32})=H(V_{12})+H(V_{32}),\\
H(Y_3|X_3)&=&H(V_{13},V_{23})=H(V_{13})+H(V_{23}).
\end{eqnarray}
\end{subequations}
Note that the second equality of each line in the above equation automatically holds due to independence, and hence the assumption essentially is the first inequality of each line. When this holds, each interference is decodable given the intended message which implies that there is enough dimension to resolve interference uncertainty. \eqref{eq:intdec} is equivalent to restricting function $f_i$ from $(v_{ji},v_{ki})$ to $y_i$ for given $x_i$ to be injective. In MIMO Gaussian case, it is not difficult to see that $N\geq 2M$ must be satisfied to ensure enough dimension although there is no formal proof that the HK-like scheme will not be GDOF optimal when this does not hold. We also consider only the case of $N<3M$ since the DOF of $M$ per user can be easily achieved by IAN if $N \geq 3M$.\\
\indent
As mentioned earlier, the capacity region of the deterministic model in~\cite{GoJa11} is considerably more complicated than the GDOF of the corresponding Gaussian model. This is due to the deterministic model being more general than the corresponding Gaussian model. Although the capacity result of a general deterministic model has value by itself, we will consider a special case of the deterministic model which resembles Gaussian model more closely to reduce amount of analysis. The deterministic model described in Figure~\ref{fig:det} additionally satisfies the following properties. 
For all $i$ and $j$, let
\begin{eqnarray}
\label{eq:set}
A_{ij}&=&\begin{cases}
\{ V_{ji}\} & \text{ if } V_{ji}=g_2 \circ g_1 (X_j)\\
\{ V_{jk} \text{ for all } k \} & \text{ if } V_{ji}= g_1 (X_j)\\
\{  X_j, V_{jk} \text{ for all } k \} & \text{ if } V_{ji} \text{ does not exist.}
\end{cases}
\end{eqnarray}
Simply speaking, $A_{ij}$ is the set of messages from transmitter $j$ which need to be decoded at receiver $i$.
Let $A$ be a subset of the set of messages $\{V_{12}, V_{13}, V_{21}, V_{23}, V_{31}, V_{32}\}$.
Then, we have for all $i$ and $j$ such that $V_{ij}=g_1 (X_i)$
\begin{eqnarray}
\label{eq:self}
I(V_{ij};Y_i|A)&=&I(V_{ij};Y_i| V_{ki} \text{ for all }k, A \cap A_{ii})=H(V_{ij}|A \cap A_{ii}),
\end{eqnarray}
if $A_{il} \subset A$ for $l \neq i$. This condition means that the maximum transmittable rate of the message $V_{ij}$ from transmitter $i$ to receiver $i$ is not changed by giving one of interferer's message to receiver $i$ as side information if another interferer's message is already given to receiver $i$. Since we only consider $N\geq 2M$ in Gaussian case, presence of interference whose dimension is $M$ does not affect decodability of $M$ dimensional message much in high SNR regime. This phenomenon is essentially described in Lemma~\ref{lem:dim}, and it governs the GDOF behavior. \eqref{eq:self} assumes similar condition in deterministic model such that the symmetric capacity of this model more closely resembles the GDOF of Gaussian case. We also assumes the following. For all $i$ and $j$ and $k \neq i,j$, we have
\begin{eqnarray}
\label{eq:int}
I(V_{ji};Y_i|A)&=&I(V_{ji};Y_i|X_i,V_{ki}, A\cap A_{ij})=H(V_{ji}| A\cap  A_{ij}),
\end{eqnarray}
if $A_{ik} \subset A$. This condition is the counter part of~\eqref{eq:self} for the message from transmitter $j$ to receiver $i$. It will be seen that these assumptions result in significant reduction of analysis and give right amount of insight for Gaussian case. 
As will be seen in Section~\ref{sec:gau}, the GDOF of Gaussian case is irrelevant to the input covariance as long as input is Gaussian satisfying power constraint. The capacity region of the deterministic model, however, would depend on input distribution. Again, to reduce amount of analysis, we only consider the case when $p(x_1)=p(x_2)=p(x_3)=p(x)$. 
\begin{theorem}
\label{thm:detcap}
The symmetric capacity of the deterministic model given in~\eqref{eq:det1} with $p(q,x_1,x_2,x_3)=p(q)p(x_1|q)$\\
$\times p(x_2|q)p(x_3|q)$ and $p(x_1|q)=p(x_2|q)=p(x_3|q)=p(x|q)$ is given as
\begin{eqnarray}
\label{eq:detcap}
&&C_{sym}\nonumber\\
&&=\min\Bigg\{ I(V_{31},V_{21};Y_1|V_{12},V_{32},Q)+I(X_1;Y_1|V_{12},V_{21},V_{31},Q),\nonumber \\
           && \qquad \qquad\frac{I(V_{21};Y_1|V_{12},V_{31},Q)+I(V_{12},V_{21},V_{31};Y_1|V_{13},V_{32},Q)}{2}+I(X_1;Y_1|V_{12},V_{21},V_{31},Q),\nonumber \\
           && \qquad \qquad\frac{I(V_{21},V_{31};Y_1|V_{12},Q)+I(V_{12};Y_1|V_{13},V_{21},V_{31},Q)}{2}+I(X_1;Y_1|V_{12},V_{21},V_{31},Q),\nonumber \\
           && \qquad \qquad\frac{1}{2}I(V_{12},V_{21},V_{31};Y_1|V_{32},Q)+I(X_1;Y_1|V_{12},V_{21},V_{31},Q),\nonumber \\
           && \qquad \qquad\frac{1}{2} I(V_{12},V_{21},V_{31};Y_1|V_{13},Q) +I(X_1;Y_1|V_{12},V_{21},V_{31},Q),\nonumber \\
           && \qquad \qquad\frac{I(V_{12},V_{21},V_{31};Y_1|Q)+I(V_{12};Y_1|V_{13},V_{21},V_{31},Q)}{3}+I(X_1;Y_1|V_{12},V_{21},V_{31},Q) \Bigg\}.
\end{eqnarray}
\end{theorem}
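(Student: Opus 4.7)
My plan is to establish Theorem~\ref{thm:detcap} by an achievability-converse matching argument tailored to the layered deterministic structure in~\eqref{eq:det1}. For achievability, I would use a Han--Kobayashi type superposition scheme with a \emph{three-part} message split per user. Concretely, exploiting the degraded relation $V_{ij}=g_1(X_i)$ versus $V_{ik}=g_2\circ g_1(X_i)$, I split transmitter $i$'s message $W_i$ into a \emph{coarse common} layer (carried on $V_{ik}$, decodable at all three receivers), a \emph{fine common} layer (carried on the increment $V_{ij}$ over $V_{ik}$, decodable at receivers $i$ and $j$ only), and a \emph{private} layer (decodable only at receiver $i$). Random coding with time-sharing variable $Q$ and independent conditional codebooks $p(x\mid q)$ gives a standard list of decoding constraints at each receiver; Fourier--Motzkin elimination then produces a rate region, and specializing to $R_1=R_2=R_3$ together with the cyclic symmetry of~\eqref{eq:det1} reduces all surviving inequalities to marginals at receiver $1$.

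For the converse I would derive each of the six bounds in~\eqref{eq:detcap} by a genie-aided argument that supplies a carefully chosen subset $A$ of the interference variables $\{V_{ij}\}$ as side information to receiver $1$, and analogously to receivers $2,3$ when bounds are averaged across users. Starting from Fano's inequality $nR_i\le I(W_i;Y_i^n)+n\epsilon_n$, I would expand $I(W_1;Y_1^n)$ by the chain rule into a sum of conditional entropy terms in $V_{21},V_{31},V_{12},V_{32}$ and a residual $I(X_1;Y_1\mid V_{12},V_{21},V_{31},Q)$. The two structural properties~\eqref{eq:self} and~\eqref{eq:int}, together with the interference-decodability identity~\eqref{eq:intdec}, are designed precisely so that each conditional information term collapses to one of the forms appearing in~\eqref{eq:detcap}. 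Concretely, the first bound corresponds to supplying $\{V_{12},V_{32}\}$ as side information to receiver $1$; the second and third are averages of a one-user bound with a two-user sum-rate bound obtained by giving $\{V_{13},V_{32}\}$ (or $\{V_{13},V_{21},V_{31}\}$) to the appropriate receivers; the fourth and fifth are two-user sum-rate bounds with coarser side information $\{V_{32}\}$ and $\{V_{13}\}$ respectively; and the sixth is a three-user sum-rate bound, divided by three. Standard single-letterization through the time-sharing variable $Q$ converts the $n$-letter expressions into the statement.

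The principal obstacle is twofold. On the achievability side, the Fourier--Motzkin elimination over a three-layer split at three receivers is combinatorially heavy: one must enumerate all receiver-wise decoding constraints, intersect the three resulting polyhedra under $R_1=R_2=R_3$, and verify that the only non-redundant faces are the six inequalities in~\eqref{eq:detcap}. On the converse side, the delicate point is matching the side information so that, when two or three users' Fano bounds are added, every cross term either cancels or collapses to one of the listed mutual information quantities via~\eqref{eq:self}, \eqref{eq:int}, and \eqref{eq:intdec}; in particular, each of bounds (ii)--(v) requires the side-information sets at the two receivers being summed to be aligned so that after adding the chain expansions no stray conditional entropy of an undecoded interference remains. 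Once this bookkeeping is in place, the matching of the achievable region with the six converse bounds is immediate from symmetry.
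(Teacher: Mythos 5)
Your proposal matches the paper's proof in all essentials: the same three-layer superposition split (coarse common on $V_{ik}$, fine common on $V_{ij}$, private on $X_i$) with symmetric rates and cyclic symmetry reducing everything to receiver-1 constraints for achievability, and the same genie-aided converse in which the side-information sets (e.g.\ $\{V_{12},V_{32}\}$ to receiver 1 for the first bound, $\{V_{12},V_{21},V_{32}\}$ and $\{V_{13},V_{23},V_{31}\}$ to receivers 2 and 3 for the last) are read off from the entropy form of each term of~\eqref{eq:detcap} and collapsed via~\eqref{eq:intdec}, \eqref{eq:self} and~\eqref{eq:int}. The only cosmetic difference is that the paper sidesteps a full Fourier--Motzkin elimination by adopting successive decoding (all common layers jointly, then the intended private layer) and directly listing, pruning, and summing the surviving error-event bounds.
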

\begin{proof}
\begin{enumerate}
\item Achievability
\begin{enumerate}
\item Codebook generation\\
Given $p(x_i|q)$, the joint probability mass function of $p(x_i,v_{ij},v_{ik}|q)$ can be defined where $v_{ij}=g_1(x_i)$ and $v_{ik}=g_2 \circ g_1(x_i)$ . First, a time-sharing sequence $q^n$ is generated by choosing each element independently according to $p(q)$. This $q^n$ is shared by all transmitters and receivers. Transmitter $i$ generates $2^{nR_{c_1}}$ codewords $v^n_{ik}(l_{ik}), l_{ik} \in \{1,2,...,2^{nR_{c_1}}\}$ of length $n$ by selecting the $m$th element of each codeword according to $p(v_{ik}|q^n_m)$ where $q^n_m$ is the $m$th element of $q^n$. For each codeword $v^n_{ik}(l_{ik})$, transmitter $i$ generates $2^{nR_{c_2}}$ codewords $v^n_{ij}(l_{ik}, l_{ij}), l_{ij} \in \{1,2,...,2^{nR_{c_2}}\}$ of length $n$ by selecting the $m$th element of each codeword according to $p(v_{ij}|v^n_{ik,m}(l_{ik}), q^n_m)$ where $v^n_{ik,m}(l_{ik})$ is the $m$th element of $v^n_{ik}(l_{ik})$. For each codeword $v^n_{ij}(l_{ik}, l_{ij})$, transmitter $i$ generates $2^{nR_{p}}$ codewords $x^n_{i}(l_{ik}, l_{ij},l_i), l_{i} \in \{1,2,...,2^{nR_{p}}\}$ of length $n$ by selecting the $m$th element of each codeword according to $p(x_i|v^n_{ik,m}(l_{ik}), v^n_{ij,m}(l_{ij}),q^n_m)$. Note that this codebook generation implies that $R_1=R_2=R_3=R=R_{c_1}+R_{c_2}+R_p$. We only consider this symmetric rate allocation since we are interested in the maximum sum rate. Note that the maximum sum rate is obtained by the symmetric rate allocation due to symmetric nature of the channel. 
\item Encoding\\
Transmitter $i$ sends codeword $x^n_{i}(l_{ik}, l_{ij},l_i)$ corresponding to the message indexed by $(l_{ik}, l_{ij},l_i)$.
\item Decoding\\
Decoding is done by checking typicality. Detailed mathematical description about decoding will not be considered here since it would require proper definition of every quantity involved which could be exhausting. Detailed description about typical decoding can be found in~\cite{CoTh06}. We also assume successive decoding. All common messages are jointly decoded first while treating all private messages as noise, and the private message of the intended transmitter is decoded successively by considering other private messages as noise. Note that $V_{jk}, k \neq i$ for $j$ such that $V_{ji}=g_2 \circ g_1(X_j)$ is not common message at receiver $i$, i.e., $V_{23}$ is not common message at receiver 1. 
\item Error analysis\\
By using standard error event analysis for typical decoding we get the following rate bounds at receiver 1. Note that we only need to consider receiver 1 due to symmetry.
\begin{subequations} 
\begin{eqnarray}
R_p&<&I(X_1;Y_1|V_{12},V_{21},V_{31},Q)\\
R_{c2}&<& I(V_{12};Y_1|V_{13},V_{21},V_{31},Q)\\
R_{c1}&<& I(V_{21};Y_1|V_{12},V_{31},Q)\\
R_{c2}&<& I(V_{31};Y_1|V_{12},V_{21},V_{32},Q)\\
R_{c1}+R_{c2}&<& I(V_{12};Y_1|V_{21},V_{31},Q)\\
R_{c1}+R_{c2}&<& I(V_{31};Y_1|V_{12},V_{21},Q)\\
R_{c1}+R_{c2}&<& I(V_{12},V_{21};Y_1|V_{13},V_{31},Q)\\
R_{c1}+R_{c2}&<& I(V_{31},V_{21};Y_1|V_{12},V_{32},Q)\\
R_{c1}+R_{c2}&<& I(V_{12},V_{31};Y_1|V_{13},V_{21},V_{32},Q)\\
2R_{c1}+R_{c2}&<& I(V_{12},V_{21};Y_1|V_{31},Q)\\
R_{c1}+2R_{c2}&<& I(V_{12},V_{31};Y_1|V_{21},V_{32},Q)\\
R_{c1}+2R_{c2}&<& I(V_{12},V_{21},V_{31};Y_1|V_{13},V_{32},Q)\\
R_{c1}+2R_{c2}&<& I(V_{12},V_{31};Y_1|V_{13},V_{21},Q)\\
2R_{c1}+R_{c2}&<& I(V_{21},V_{31};Y_1|V_{12},Q)\\
2R_{c1}+2R_{c2}&<& I(V_{12},V_{31};Y_1|V_{21},Q)\\
2R_{c1}+2R_{c2}&<& I(V_{12},V_{21},V_{31};Y_1|V_{32},Q)\\
2R_{c1}+2R_{c2}&<& I(V_{12},V_{21},V_{31};Y_1|V_{13},Q)\\
3R_{c1}+2R_{c2}&<& I(V_{12},V_{21},V_{31};Y_1|Q).
\end{eqnarray}
\end{subequations}
By using~\eqref{eq:self} and~\eqref{eq:int}, we can further reduce the number of relevant bounds for the symmetric rate as 
\begin{subequations} 
\label{eq:subnd}
\begin{eqnarray}
\label{eq:subnd1}
R_p&<&I(X_1;Y_1|V_{12},V_{21},V_{31},Q)\\
\label{eq:subnd2}
R_{c2}&<& I(V_{12};Y_1|V_{13},V_{21},V_{31},Q)\\
\label{eq:subnd3}
R_{c1}&<& I(V_{21};Y_1|V_{12},V_{31},Q)\\
\label{eq:subnd4}
R_{c1}+R_{c2}&<& I(V_{31},V_{21};Y_1|V_{12},V_{32},Q)\\
\label{eq:subnd5}
R_{c1}+2R_{c2}&<& I(V_{12},V_{21},V_{31};Y_1|V_{13},V_{32},Q)\\
\label{eq:subnd6}
2R_{c1}+R_{c2}&<& I(V_{21},V_{31};Y_1|V_{12},Q)\\
\label{eq:subnd7}
2R_{c1}+2R_{c2}&<& I(V_{12},V_{21},V_{31};Y_1|V_{32},Q)\\
\label{eq:subnd8}
2R_{c1}+2R_{c2}&<& I(V_{12},V_{21},V_{31};Y_1|V_{13},Q)\\
\label{eq:subnd9}
3R_{c1}+2R_{c2}&<& I(V_{12},V_{21},V_{31};Y_1|Q).
\end{eqnarray}
\end{subequations}
The above inequalities can be written as 
\begin{subequations} 
\label{eq:sumbnd}
\begin{eqnarray}
\eqref{eq:subnd4}+\eqref{eq:subnd1}: R&<&I(V_{31},V_{21};Y_1|V_{12},V_{32},Q)+I(X_1;Y_1|V_{12},V_{21},V_{31},Q)\\
\eqref{eq:subnd3}+\eqref{eq:subnd5}+2 \times \eqref{eq:subnd1}:2R&<&I(V_{21};Y_1|V_{12},V_{31},Q)+I(V_{12},V_{21},V_{31};Y_1|V_{13},V_{32},Q)\nonumber\\
&&\quad +2I(X_1;Y_1|V_{12},V_{21},V_{31},Q)\\      
\eqref{eq:subnd6}+\eqref{eq:subnd2}+2 \times \eqref{eq:subnd1}:2R&<&I(V_{21},V_{31};Y_1|V_{12},Q)+I(V_{12};Y_1|V_{13},V_{21},V_{31},Q)\nonumber\\
&&\quad +2I(X_1;Y_1|V_{12},V_{21},V_{31},Q)\\
\eqref{eq:subnd7}+2 \times \eqref{eq:subnd1}:2R&<& I(V_{12},V_{21},V_{31};Y_1|V_{32},Q)+2I(X_1;Y_1|V_{12},V_{21},V_{31},Q)\\
\eqref{eq:subnd8}+2 \times \eqref{eq:subnd1}:2R&<& I(V_{12},V_{21},V_{31};Y_1|V_{13},Q)+2I(X_1;Y_1|V_{12},V_{21},V_{31},Q)\\
\eqref{eq:subnd9}+\eqref{eq:subnd2}+3 \times \eqref{eq:subnd1}:3R&<&I(V_{12},V_{21},V_{31};Y_1|Q)+I(V_{12};Y_1|V_{13},V_{21},V_{31},Q)\nonumber\\
&&\quad +3I(X_1;Y_1|V_{12},V_{21},V_{31},Q),
\end{eqnarray}
\end{subequations} 
which determines the expression for the maximum symmetric rate given as~\eqref{eq:detcap}.
\end{enumerate}
\item Converse\\
It is sufficient to show that $R_1+R_2+R_3$ must be smaller than $3C_{sym}$ for reliable communication which corresponds to show that $R_1+R_2+R_3$ must be smaller than 3 times of each term in~\eqref{eq:detcap}. Let us consider the last term of ~\eqref{eq:detcap} and rewrite it as
\begin{subequations}
\begin{eqnarray}
&&R_1+R_2+R_3\nonumber \\
&&\quad < I(V_{12},V_{21},V_{31};Y_1|Q)+I(V_{12};Y_1|V_{13},V_{21},V_{31},Q)+3I(X_1;Y_1|V_{12},V_{21},V_{31},Q)\\
&&\quad=I(V_{12},V_{21},V_{31};Y_1|Q)+I(V_{23};Y_2|V_{12},V_{21},V_{32},Q)\nonumber\\
&&\qquad+I(X_1;Y_1|V_{12},V_{21},V_{31},Q)+I(X_2;Y_2|V_{12},V_{23},V_{32},Q)+I(X_3;Y_3|V_{13},V_{23},V_{31},Q)\\
&& \quad = H(Y_1|Q)+H(Y_2|V_{12},V_{21},V_{32},Q)+H(Y_3|V_{13},V_{23},V_{31},Q).
\end{eqnarray}
\end{subequations}
The above expression implies that receiver 2 needs to have side information $V_{12},V_{21},V_{32}$, and receiver 3 needs side information $V_{13},V_{23},V_{31}$. Now we proceed to converse. From Fano's inequality, we have
\begin{subequations}
\begin{eqnarray}
&&n(R_1+R_2+R_3)\nonumber \\
&&\quad\leq I(X_1^n;Y_1^n)+I(X_2^n;Y_2^n)+I(X_3^n;Y_3^n)\\
           &&\quad \leq I(X_1^n;Y_1^n)+I(X_2^n;Y_2^n,V^n_{12},V^n_{21},V^n_{32})+I(X_3^n;Y_3^n,V^n_{13},V^n_{23},V^n_{31})\\
           &&\quad = H(Y_1^n)-H(V^n_{21})-H(V^n_{31})+H(V^n_{21})+H(Y_2^n|V^n_{12},V^n_{21},V^n_{32})\nonumber\\
           && \qquad +H(V^n_{31})+H(Y_3^n|V^n_{13},V^n_{23},V^n_{31})\\
           &&\quad =H(Y_1^n)+H(Y_2^n|V^n_{12},V^n_{21},V^n_{32})+H(Y_3^n|V^n_{13},V^n_{23},V^n_{31})\\
           &&\quad \leq \sum_{i=1}^n \Big (H(Y_{1i})+H(Y_{2i}|V_{12i},V_{21i},V_{32i})+H(Y_{3i}|V_{13i},V_{23i},V_{31i})\Big)\\
           &&= \sum_{i=1}^n \Big (H(Y_{1q}|Q=i)+H(Y_{2q}|V_{12q},V_{21q},V_{32q},Q=i)+H(Y_{3q}|V_{13q},V_{23q},V_{31q},Q=i)\Big)\\
           &&\quad = n\Big(H(Y_1|Q)+H(Y_2|V_{12},V_{21},V_{32},Q)+H(Y_3|V_{13},V_{23},V_{31},Q)\Big).
\end{eqnarray}
\end{subequations}
,where $Q=i \in \{1,2,...,n\}$ with probability $1/n$, and $Y_1=Y_{1Q}$ and all other similar terms are new random variables whose distributions depend on $Q$ in the same way as the distributions of $Y_{1i}$ and all other similar terms depend on $i$. We can do similarly for each element of~\eqref{eq:detcap} which completes converse.
\end{enumerate}
\end{proof}
In the proof of the above theorem, successive decoding is used. The optimal way of decoding would be jointly decoding every message which needs to be decoded, i.e., $X_1, V_{12}, V_{13}, V_{21}, V_{31}, V_{32}$ needs to be jointly decoded at receiver 1. It is not difficult to see that every bound of~\eqref{eq:sumbnd} must be a bound for joint decoding as well, and hence, joint decoding is not better than successive decoding in terms of the symmetric capacity. In fact, joint decoding is not better even in terms of achievable region, i.e., successive decoding achieves the capacity region. The key reason is from sequential superposition encoding at the transmitter. Suppose $x_1^n(1,1,1)$ was transmitted. An error event of decoded message $x_1^n(i,j,1)$ for $i,j \neq 1$ cannot be evaluated differently from an error event of decoded message $x_1^n(i,j,k)$ for $i,j,k \neq 1$, since no part of $x_1^n(i,j,1)$ is the actually transmitted message. As a result, error events which can be evaluated with joint decoding are the same as those with successive decoding. In this deterministic model, however, all common messages still need to be jointly decoded, since only messages from the same transmitter are superposition encoded. Successive decoding of common and private messages are allowed even in this case due to the fact that incorrect decoding of interferers' messages with correct decoding of intended messages is not an error in interference channel. \\
\indent
The reason why sequential superposition encoding is needed in the proof of the above theorem is due to degraded nature of the deterministic model. It is well known that sequential superposition encoding is optimal for degraded broadcast channel (BC)~\cite[Ch. 15.6.3]{CoTh06}, and this carries over to the deterministic model in this paper. One thing to note is that simultaneous superposition encoding can be used in Gaussian BC even though every Gaussian BC is degraded~\cite[Ch. 15.6.3]{CoTh06}. Since simultaneous superposition encoding is used for MIMO GIC in this paper, successive decoding does not fully achieve the capacity region, which implies that terms in~\eqref{eq:detcap} would not fully cover bounds defining the GDOF region. In this paper, however, we are interested in the GDOF defined by the symmetric capacity, and hence this does not cause a problem. In the same reason, successive decoding will also be used for GIC. \\
\indent
As seen in the proof of the above theorem, the form of achievability bounds in~\eqref{eq:sumbnd} directly determines side information for each receiver. This is one of the benefits of deterministic modeling as mentioned earlier. As will be seen, we can directly apply this to corresponding Gaussian model. We should also note that converse is enabled by the assumption given in~\eqref{eq:intdec}.
\subsection{Case of $\alpha_2<1<\alpha_1$}
The corresponding deterministic model is given as follows.
\begin{subequations} 
\label{eq:det2}
\begin{eqnarray}
V_{11}&=&g_1(X_1), \quad V_{13}=g_2\circ g_1(X_1),\\
V_{21}&=&g_2\circ g_1(X_2), \quad V_{22}=g_1(X_2),\\
V_{32}&=&g_2\circ g_1(X_3), \quad V_{33}=g_1(X_3),\\
Y_1&=&f_1(V_{11},V_{21},X_3),\\ 
Y_2&=&f_2(V_{22},X_1,V_{32}),\\
Y_3&=&f_3(V_{33},V_{13},X_2),
\end{eqnarray}
\end{subequations}
where $f_i$ and $g_i$ are deterministic functions. We assume 
\begin{subequations} 
\label{eq:intdec2}
\begin{eqnarray}
H(Y_1|V_{11})&=&H(V_{21},X_3)=H(V_{21})+H(X_3),\\
H(Y_2|V_{22})&=&H(X_1,V_{32})=H(X_1)+H(V_{32}),\\
H(Y_3|V_{33})&=&H(V_{13},X_2)=H(V_{13})+H(X_2).
\end{eqnarray}
\end{subequations}
As in the case of $\alpha<1$, we assume the following. First, define $A_{ij}$ as in~\eqref{eq:set}.
Let $A$ be a subset of sets of messages $\{X_1, V_{11}, V_{13}, X_2, V_{22}, V_{21}, X_3, V_{32}, V_{33}\}$.
Then, we have for all $i$ and $j$ such that $X_j \in A_{ij}$
\begin{eqnarray}
\label{eq:strong}
I(X_j;Y_i|A)&=&I(X_j;Y_i|V_{ki} \text{ for all }k, A \cap A_{ij})=H(X_j|A \cap A_{ij}),
\end{eqnarray}
if $A_{il} \subset A$ for $l \neq j$. For all $i$ and $j$ and $k,l$ such that $X_k\in A_{ik}$ and $l\neq k,j$, we have
\begin{eqnarray}
\label{eq:weak}
I(V_{ji};Y_i|A)&=&I(V_{ji};Y_i|V_{li}, X_{k}, A\cap A_{ij})=H(V_{ji}| A\cap  A_{ij}),
\end{eqnarray}
if $A_{ik} \subset A$ or $A_{il} \subset A$. We now present the symmetric capacity of the deterministic model.
\begin{theorem}
\label{thm:detcap2}
The symmetric capacity of the deterministic model given in~\eqref{eq:det2} with $p(q,x_1,x_2,x_3)=p(q)p(x_1|q)$\\
$\times p(x_2|q)p(x_3|q)$ and $p(x_1|q)=p(x_2|q)=p(x_3|q)=p(x|q)$ is given as
\begin{eqnarray}
\label{eq:detcap2}
&&C_{sym}\nonumber\\
&&=\min\Bigg\{ I(V_{11},V_{21};Y_1|V_{13},X_3,Q),\nonumber\\
&& \qquad \qquad \frac{I(V_{21};Y_1|V_{11},X_3,Q)+I(V_{11},V_{21},X_3;Y_1|V_{13},V_{32},Q)}{2},\nonumber \\
&& \qquad \qquad \frac{1}{2}I(V_{11},V_{21},X_3;Y_1|V_{32},Q),\nonumber \\
&& \qquad \qquad \frac{1}{2}I(V_{11},V_{21},X_3;Y_1|V_{13},Q),\nonumber\\
&& \qquad \qquad\frac{I(V_{11},V_{21},X_3;Y_1|Q)+I(V_{11};Y_1|V_{13},V_{21},X_3,Q)}{3}\Bigg\}.
\end{eqnarray}
\end{theorem}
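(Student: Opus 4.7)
The plan is to mirror the argument used for Theorem~\ref{thm:detcap}, modifying the message-splitting and decoding structure to account for the fact that each user's strong-cross component must now be decoded in full at its strong-cross receiver. I would use sequential superposition encoding to split each user's message into three common layers corresponding to the refinement chain below $X_i$: the outermost layer matches $V_{ik}=g_2\circ g_1(X_i)$ (the weak-cross component), the middle layer matches $V_{ii}=g_1(X_i)$, and the top layer matches $X_i$ itself. Because the top layer is now recoverable at the strong-cross receiver rather than being drowned in noise, there is no truly private component and hence no $I(X_1;Y_1|\cdots)$ term appears in~\eqref{eq:detcap2}; the per-user rate is simply the sum of the three common-layer rates, which I will call $R_{c_0},R_{c_1},R_{c_2}$.

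For achievability, I would generate the outermost-layer codewords first, superpose the middle-layer codewords on them, then superpose the top-layer codewords, and transmit $X_i$. Joint-typicality decoding at receiver~1 targets $V_{11}$, $V_{21}$, and $X_3$; enumerating the error events over every subset of incorrectly decoded codewords yields a long list of rate inequalities. As in Theorem~\ref{thm:detcap}, assumptions~\eqref{eq:strong} and~\eqref{eq:weak} then collapse most of these into their pure-entropy forms whenever enough interferer messages are already conditioned upon, leaving only a handful of binding inequalities. A Fourier--Motzkin-style elimination of $R_{c_0},R_{c_1},R_{c_2}$ in favor of the symmetric rate $R$ should produce precisely the five min-terms displayed in~\eqref{eq:detcap2}.

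For the converse, I would treat each of the five bounds separately and show that $n(R_1+R_2+R_3)$ is at most three times the corresponding right-hand side. The form of each achievability bound dictates which side information to hand each receiver: for example, the $\tfrac12 I(V_{11},V_{21},X_3;Y_1|V_{32},Q)$ term prescribes a weak-cross variable as side information at the analog receivers in the cyclic sum. Starting from Fano's inequality and using~\eqref{eq:intdec2} to split each $H(Y_i|\text{side})$ into a sum of entropies of genuinely decoded quantities, one arrives after a standard time-sharing step at the claimed bound. The main obstacle I expect is the Fourier--Motzkin step: with three common-layer rates and a longer raw list of inequalities than in Theorem~\ref{thm:detcap}, correctly identifying which constraints bind and align with exactly the five terms of~\eqref{eq:detcap2} takes real care. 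On the converse side, the asymmetry between $V_{13}$-type and $V_{32}$-type side-information patterns requires attention to the cyclic strong/weak cross structure of~\eqref{eq:det2} so that~\eqref{eq:intdec2} yields the desired cancellations rather than leaving residual cross entropies.
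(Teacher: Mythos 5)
Your overall architecture --- superposition encoding along the degradation chain, joint typicality decoding, reduction of the error-event inequalities via~\eqref{eq:strong} and~\eqref{eq:weak}, Fourier--Motzkin elimination to the five min-terms, and a genie-aided converse whose side information is read off from the achievability bounds --- is the same as the paper's. But there is a concrete error in your message-splitting. You put a third information-bearing layer $R_{c_0}$ on the refinement from $V_{ii}=g_1(X_i)$ up to $X_i$ and claim the per-user rate is $R_{c_0}+R_{c_1}+R_{c_2}$, justifying this by saying the top layer ``is recoverable at the strong-cross receiver.'' That receiver, however, is an \emph{unintended} one: in the model~\eqref{eq:det2} the intended output $Y_1=f_1(V_{11},V_{21},X_3)$ depends on $X_1$ only through $V_{11}=g_1(X_1)$, so the refinement from $V_{11}$ to $X_1$ is literally invisible at receiver~1 and can carry zero rate toward $R_1$ (formally, $I(X_1;Y_1|V_{11},V_{21},X_3,Q)=0$). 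This is also inconsistent with your own decoding rule, in which receiver~1 only targets $V_{11}$ rather than $X_1$. The paper's codebook uses exactly two rate splits, $R=R_{c_1}+R_{c_2}$, and generates a \emph{single} codeword $x_i^n(l_{ij},l_{ii})$ per index pair; the absence of the $I(X_1;Y_1|\cdots)$ term in~\eqref{eq:detcap2} reflects that the would-be private layer delivers nothing to its own receiver, not that it has been promoted to a common message. If you ran your Fourier--Motzkin honestly the constraint $R_{c_0}<0+\epsilon$ would force $R_{c_0}=0$ and rescue the answer, but as written the splitting and its justification are wrong.

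A second, smaller gap is in the converse sketch. For the bound $\tfrac12 I(V_{11},V_{21},X_3;Y_1|V_{32},Q)$ it is not enough to hand the weak-cross variable to each receiver in one copy of the mutual information; the paper also replaces the second copy by $I(X_1^n;Y_1^n,X_1^n)=H(X_1^n)$ (i.e., the genie gives the receiver its own input in full), and only then do the terms $H(X_3^n|V_{32}^n)$, $H(V_{21}^n)$, etc.\ telescope via~\eqref{eq:intdec2}. The paper explicitly warns that ``the actual converse requires more effort'' than the side information suggested by the achievability expression, and your outline does not identify where that extra step enters.
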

\begin{proof}
\begin{enumerate}
\item Achievability
\begin{enumerate}
\item Codebook generation\\
Given $p(x_i|q)$, the joint probability mass function of $p(x_i,v_{ii},v_{ij}|q)$ can be defined where $v_{ii}=g_1(x_i)$ and $v_{ij}=g_2 \circ g_1(x_i)$ . First, a time-sharing sequence $q^n$ is generated by choosing each element independently according to $p(q)$. This $q^n$ is shared by all transmitters and receivers. Transmitter $i$ generates $2^{nR_{c_1}}$ codewords $v^n_{ij}(l_{ij}), l_{ij} \in \{1,2,...,2^{nR_{c_1}}\}$ of length $n$ by selecting the $m$th element of each codeword according to $p(v_{ij}|q^n_m)$. For each codeword $v^n_{ij}(l_{ij})$, transmitter $i$ generates $2^{nR_{c_2}}$ codewords $v^n_{ii}(l_{ij}, l_{ii}), l_{ii} \in \{1,2,...,2^{nR_{c_2}}\}$ of length $n$ by selecting the $m$th element of each codeword according to $p(v_{ii}|v^n_{ij,m}(l_{ij}),q^n_m)$. For each codeword $v^n_{ii}(l_{ij}, l_{ii})$, transmitter $i$ generates one codeword $x^n_{i}(l_{ij}, l_{ii})$ by selecting the $m$th element of each codeword according to $p(x_i|v^n_{ij,m}(l_{ij}), v^n_{ii,m}(l_{ii}),q^n_m)$.
\item Encoding\\
Transmitter $i$ sends codeword $x^n_{i}(l_{ij}, l_{ii})$ corresponding to the message indexed by $(l_{ij}, l_{ii})$.
\item Decoding\\
All messages are decoded jointly, i.e., $V_{11},V_{13},V_{21},X_{3},V_{33},V_{32}$ are jointly decoded at the receiver 1. 
\item Error analysis\\
By using standard error event analysis for typical decoding and~\eqref{eq:strong} and~\eqref{eq:weak}, we get the relevant bounds for the symmetric rate as 
\begin{subequations} 
\begin{eqnarray}
R_{c2}&<& I(V_{11};Y_1|V_{13},V_{21},X_3,Q)\\
R_{c1}&<& I(V_{21};Y_1|V_{11},X_3,Q)\\
R_{c1}+R_{c2}&<& I(V_{11},V_{21};Y_1|V_{13},X_3,Q)\\
R_{c1}+2R_{c2}&<& I(V_{11},V_{21},X_3;Y_1|V_{13},V_{32},Q)\\
2R_{c1}+2R_{c2}&<& I(V_{11},V_{21},X_3;Y_1|V_{32},Q)\\
2R_{c1}+2R_{c2}&<& I(V_{11},V_{21},X_3;Y_1|V_{13},Q)\\
3R_{c1}+2R_{c2}&<& I(V_{11},V_{21},X_3;Y_1|Q).
\end{eqnarray}
\end{subequations}
The above inequalities can be written as 
\begin{subequations} 
\begin{eqnarray}
R&<&I(V_{11},V_{21};Y_1|V_{13},X_3,Q)\\
2R&<& I(V_{21};Y_1|V_{11},X_3,Q)+I(V_{11},V_{21},X_3;Y_1|V_{13},V_{32},Q)\\
2R&<&I(V_{11},V_{21},X_3;Y_1|V_{32},Q)\\
2R&<&I(V_{11},V_{21},X_3;Y_1|V_{13},Q)\\
3R&<&I(V_{11},V_{21},X_3;Y_1|Q)+I(V_{11};Y_1|V_{13},V_{21},X_3,Q),
\end{eqnarray}
\end{subequations} 
which determines the expression for the maximum symmetric rate given as~\eqref{eq:detcap2}.
\end{enumerate}
\item Converse\\
Converse is proven in similar ways to the case of $\alpha_1<1$, i.e., show that $R_1+R_2+R_3$ must be smaller than 3 times of each term in~\eqref{eq:detcap2}. There are several bounds which are proven through slightly different ways from the case of $\alpha_1<1$. Let us consider the third term of~\eqref{eq:detcap2}. It can be written as
\begin{subequations}
\begin{eqnarray}
&&2(R_1+R_2+R_3)\nonumber \\
&&\quad <I(V_{11},V_{21},X_3;Y_1|V_{32},Q)+I(X_1,V_{22},V_{32};Y_2|V_{13},Q)+I(V_{13},X_2,V_{33};Y_3|V_{21},Q) \\
&& \quad = H(Y_1|V_{32},Q)+H(Y_2|V_{13},Q)+H(Y_1|V_{21},Q).
\end{eqnarray}
\end{subequations}
Although the above expression only implies that receiver 1 needs to have side information $V_{32}$, the actual converse requires more effort than that. From Fano's inequality, we have
\begin{subequations}
\begin{eqnarray}
&&2n(R_1+R_2+R_3)\nonumber \\
&&\quad\leq 2I(X_1^n;Y_1^n)+2I(X_2^n;Y_2^n)+2I(X_3^n;Y_3^n)\\
 &&\quad \leq I(X_1^n;Y_1^n,V^n_{32})+I(X_1^n;Y_1^n,X_1^n)+I(X_2^n;Y_2^n,V^n_{13})+I(X_2^n;Y_2^n,X_2^n)\nonumber\\
 &&\qquad +I(X_3^n;Y_3^n,V^n_{21})+I(X_3^n;Y_3^n,X_3^n)\\
 &&\quad =H(Y_1^n|V^n_{32})-H(V^n_{21})-H(X_3^n|V^n_{32})+H(X_1^n)\nonumber\\
 && \qquad +H(Y_2^n|V^n_{13})-H(V^n_{32})-H(X_1^n|V^n_{13})+H(X_2^n)\nonumber\\
  && \qquad+H(Y_3^n|V^n_{21})-H(V^n_{13})-H(X_2^n|V^n_{21})+H(X_3^n)\nonumber\\
  &&\quad =H(Y_1^n|V^n_{32})+H(Y_2^n|V^n_{13})+H(Y_3^n|V^n_{21})\\
  &&\quad \leq \sum_{i=1}^n \Big (H(Y_{1i}|V_{32i})+H(Y_{2i}|V_{13i})+H(Y_{3i}|V_{21i})\Big)\\
           &&\quad = n\Big(H(Y_1|V_{32},Q)+H(Y_2|V_{13},Q)+H(Y_3|V_{21},Q)\Big).
\end{eqnarray}
\end{subequations}
Remaining bounds can be evaluated similarly.
\end{enumerate}
\end{proof}
\subsection{Case of $\alpha_2>1$}
The corresponding deterministic model is given as follows.
\begin{subequations} 
\label{eq:det3}
\begin{eqnarray}
V_{11}&=&g_2\circ g_1(X_1), \quad V_{13}=g_1(X_1),\\
V_{21}&=&g_1(X_2), \quad V_{22}=g_2\circ g_1(X_2),\\
V_{32}&=&g_1(X_3), \quad V_{33}=g_2\circ g_1(X_3),\\
Y_1&=&f_1(V_{11},V_{21},X_3),\\ 
Y_2&=&f_2(V_{22},X_1,V_{32}),\\
Y_3&=&f_3(V_{33},V_{13},X_2),
\end{eqnarray}
\end{subequations}
where $f_i$ and $g_i$ are deterministic functions. 
We assume~\eqref{eq:intdec2},~\eqref{eq:strong},~\eqref{eq:weak} in exactly the same ways to the case of $\alpha_2<1<\alpha_1$.
We now present the symmetric capacity of the deterministic model.
\begin{theorem}
\label{thm:detcap3}
The symmetric capacity of the deterministic model given in~\eqref{eq:det3} with $p(q,x_1,x_2,x_3)=p(q)p(x_1|q)$\\
$\times p(x_2|q)p(x_3|q)$ and $p(x_1|q)=p(x_2|q)=p(x_3|q)=p(x|q)$ is given as
\begin{eqnarray}
\label{eq:detcap3}
&&C_{sym}\nonumber\\
&&=\min\Bigg\{ I(V_{11};Y_1|V_{21},X_3,Q), \frac{1}{3} I(V_{11},V_{21},X_3;Y_1|Q)\Bigg\}.
\end{eqnarray}
\end{theorem}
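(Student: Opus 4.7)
The plan is to follow the same three-step template (codebook, encoding-decoding, error analysis) used for Theorems~\ref{thm:detcap} and~\ref{thm:detcap2}, and then handle the converse by Fano plus carefully chosen side information. The reason the expression in~\eqref{eq:detcap3} is so much shorter than in the two preceding theorems is that here $V_{ii}=g_2(V_{ij})$ (the inner layer of each transmitter's signal is a function of the outer layer), so every conditional mutual information of the form $I(V_{11};Y_1|V_{13},\cdot)$ vanishes identically, collapsing large families of candidate bounds.

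For achievability I would reuse the Theorem~\ref{thm:detcap2} codebook without change: a time-sharing sequence $q^n$, then for each $i$ a codebook $v^n_{ij}$ at rate $R_{c_1}$, a superposed $v^n_{ii}$ at rate $R_{c_2}$, and a deterministic $x^n_i$, so that the symmetric per-user rate is $R=R_{c_1}+R_{c_2}$. Joint typicality decoding of all decodable messages at each receiver together with~\eqref{eq:strong} and~\eqref{eq:weak} produces the same list of joint-typicality bounds as in the proof of Theorem~\ref{thm:detcap2}; the $V_{11}=g_2(V_{13})$ collapse then annihilates the bounds whose conditioning includes $V_{13}$, leaving only
\[
R_{c_2}<I(V_{11};Y_1|V_{21},X_3,Q),\quad 3R_{c_1}+2R_{c_2}<I(V_{11},V_{21},X_3;Y_1|Q).
\]
Fourier--Motzkin elimination over these two constraints together with $R=R_{c_1}+R_{c_2}$ then produces exactly the two terms of~\eqref{eq:detcap3}.

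For the converse I take the two side-information assignments suggested by the conditioning inside the two terms of~\eqref{eq:detcap3}. For the first term I give receiver~$1$ the side information $(V_{21}^n,X_3^n)$, and cyclically to the other receivers: Fano's inequality together with~\eqref{eq:intdec2} cancels the side-information entropies, single-letterization with a time-sharing variable $Q$ gives $H(Y_1|V_{21},X_3,Q)=I(V_{11};Y_1|V_{21},X_3,Q)$, and symmetry of the input distribution yields $R_1+R_2+R_3\leq 3\,I(V_{11};Y_1|V_{21},X_3,Q)$. For the second term I give receiver~$2$ the side information $(V_{32}^n,X_1^n)$ and receiver~$3$ the side information $(V_{13}^n,X_2^n)$, leaving receiver~$1$ unaided; the Fano manipulation reduces the sum-rate bound to $H(Y_1|Q)-H(V_{21}|Q)-H(X_3|Q)+H(V_{22}|Q)+H(V_{33}|Q)$, and the elementary inequality $2H(V_{11}|Q)\leq H(V_{13}|Q)+H(X_1|Q)$ (immediate from $V_{11}=g_2(V_{13})$ and $V_{13}=g_1(X_1)$) combined with the symmetric input distribution collapses this to $H(Y_1|Q)=I(V_{11},V_{21},X_3;Y_1|Q)$, giving the second bound after dividing by three.

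The main obstacle is the Fourier--Motzkin bookkeeping on the achievability side: one must verify that the $V_{11}=g_2(V_{13})$ collapse does not introduce any new binding constraint beyond the two listed, and that every remaining bound inherited from the proof of Theorem~\ref{thm:detcap2} either becomes redundant or merges cleanly into the two surviving constraints. On the converse side, the subtlety is to choose the side information for the second bound so that the residual $I(V_{11};Y_1|V_{13},\cdot)$ term that would normally appear (as in Theorem~\ref{thm:detcap}'s last bound) vanishes; the $V_{11}=g_2(V_{13})$ structure makes this automatic.
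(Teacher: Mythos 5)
Your converse is essentially sound: for the first term the genie $(V_{21}^n,X_3^n)$ at each receiver works as you describe, and for the second term your assignment (receiver 1 unaided, receivers 2 and 3 given the interfering signals) produces the residual $H(Y_1^n)-H(V_{21}^n)-H(X_3^n)+H(V_{22}^n)+H(V_{33}^n)$, which is killed by the within-user degradedness $H(V_{22}^n)\leq H(V_{21}^n)$ and $H(V_{33}^n)\leq H(X_3^n)$ (do this cancellation at the $n$-letter level before single-letterizing, since the negative entropy terms cannot be single-letterized directly). This differs slightly from the paper, which bounds $3\sum_i R_i$ and gives each receiver its own $V_{ii}^n$ as side information, exploiting the Markov chain $X_1\to V_{11}\to Y_1$ to get $I(X_1^n;Y_1^n,V_{11}^n)=H(V_{11}^n)$ and then $2H(V_{11}^n)\leq H(V_{13}^n)+H(X_1^n)$; both routes are valid.

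The achievability, however, has a genuine gap. You propose to reuse the Theorem~\ref{thm:detcap2} codebook, superposing $v_{ii}^n$ at rate $R_{c_2}$ on top of $v_{ij}^n$ at rate $R_{c_1}$ with $R=R_{c_1}+R_{c_2}$. But in the model~\eqref{eq:det3} the degradation order is reversed: $V_{ii}=g_2\circ g_1(X_i)=g_2(V_{ij})$ is a \emph{deterministic function} of $V_{ij}$, so the conditional law $p(v_{ii}\mid v_{ij},q)$ is degenerate and $v_{ii}^n$ cannot carry an independent index at rate $R_{c_2}>0$. Worse, receiver $i$ observes its own transmitter only through $V_{ii}$, so any information indexed at the finer layer $v_{ij}$ is invisible to the intended receiver. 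The correct scheme (the paper's) therefore puts the \emph{entire} rate $R$ on the cloud centers $v_{ii}^n$ and generates $v_{ij}^n$ and $x_i^n$ as single refinements carrying no additional rate; the error analysis then yields $R<I(V_{11};Y_1|V_{21},X_3,Q)$ and $3R<I(V_{11},V_{21},X_3;Y_1|Q)$ directly on $R$, giving~\eqref{eq:detcap3} immediately. Your two claimed constraints do not even recover the theorem: with $R_{c_2}<I_1$ and $3R_{c_1}+2R_{c_2}<I_2$, Fourier--Motzkin on $R=R_{c_1}+R_{c_2}$ yields $R<(I_1+I_2)/3$ whenever $I_2\geq 2I_1$, which strictly exceeds $\min\{I_1,\tfrac{1}{3}I_2\}=I_1$ and thus contradicts your own converse --- a clear signal that the splitting is not admissible here.
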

\begin{proof}
\begin{enumerate}
\item Achievability
\begin{enumerate}
\item Codebook generation\\
Given $p(x_i|q)$, the joint probability mass function of $p(x_i,v_{ii},v_{ij})$ can be defined where $v_{ii}=g_2 \circ g_1(x_i)$ and $v_{ij}=g_1(x_i)$. First, a time-sharing sequence $q^n$ is generated by choosing each element independently according to $p(q)$. This $q^n$ is shared by all transmitters and receivers. Transmitter $i$ generates $2^{nR}$ codewords $v^n_{ii}(l_{ii}), l_{ii} \in \{1,2,...,2^{nR}\}$ of length $n$ by selecting the $m$th element of each codeword independently according to $p(v_{ii}|q^n_m)$. For each codeword $v^n_{ii}(l_{ii})$, transmitter $i$ generates one codeword $v^n_{ij}(l_{ii})$ of length $n$ by selecting the $m$th element of each codeword according to $p(v_{ij}|v^n_{ii,m}(l_{ii}),q^n_m)$. For each codeword $v^n_{ij}(l_{ii})$, transmitter $i$ generates one codeword $x^n_{i}(l_{ii})$ by selecting the $m$th element of each codeword according to $p(x_i|v^n_{ii,m}(l_{ii}), v^n_{ij,m}(l_{ii}),q^n_m)$.
\item Encoding\\
Transmitter $i$ sends codeword $x^n_{i}(l_{ii})$ corresponding to the message indexed by $l_{ii}$.
\item Decoding\\
All messages are decoded jointly, i.e., $V_{11},V_{21},V_{22},X_{3},V_{33},V_{32}$ are jointly decoded at the receiver 1. 
\item Error analysis\\
By using standard error event analysis for typical decoding and~\eqref{eq:strong},~\eqref{eq:weak}, we get the relevant bounds for the symmetric rate as 
\begin{subequations} 
\begin{eqnarray}
R&<& I(V_{11};Y_1|V_{21},X_3,Q)\\
3R&<& I(V_{11},V_{21},X_3;Y_1|Q),
\end{eqnarray}
\end{subequations}
which determines the expression for the maximum symmetric rate given as~\eqref{eq:detcap3}.
\end{enumerate}
\item Converse\\
Converse is proven in similar ways to the previous cases. Let us consider the second term of~\eqref{eq:detcap3}. It can be written as
\begin{subequations}
\begin{eqnarray}
&&3(R_1+R_2+R_3)\nonumber \\
&&\quad < I(V_{11},V_{21},X_3;Y_1|Q)+I(X_1,V_{22},V_{32};Y_2|Q)+I(V_{13},X_2,V_{33};Y_3|Q) \\
&& \quad = H(Y_1|Q)+H(Y_2|Q)+H(Y_3|Q).
\end{eqnarray}
\end{subequations}
Hence, no side information is given to any receiver. Fano's inequality, however, starts from more than three terms which implies appropriate side information for the additional terms need to be given. The appropriate side information is the message from the intended transmitter. From Fano's inequality, we have
\begin{subequations}
\begin{eqnarray}
&&3n(R_1+R_2+R_3)\nonumber \\
&&\quad\leq 3I(X_1^n;Y_1^n)+3I(X_2^n;Y_2^n)+3I(X_3^n;Y_3^n)\\
 &&\quad \leq I(X_1^n;Y_1^n)+2I(X_1^n;Y_1^n,V_{11}^n)+I(X_2^n;Y_2^n)+I(X_2^n;Y_2^n,V_{22}^n)\nonumber\\
 &&\qquad +I(X_3^n;Y_3^n)+I(X_3^n;Y_3^n,V_{33}^n)\\
 &&\quad = H(Y_1^n)-H(V_{21}^n)-H(X_3^n)+2H(V_{11}^n)+H(Y_2^n)-H(V_{32}^n)-H(X_1^n)+2H(V_{22}^n)\nonumber\\
 &&\qquad +H(Y_3^n)-H(V_{13}^n)-H(X_2^n)+2H(V_{33}^n)\\
 &&\quad = H(Y_1^n)+H(Y_2^n)+H(Y_3^n)+2H(V_{11}^n)-H(V_{13}^n)-H(X_1^n)\nonumber\\
 &&\qquad +2H(V_{22}^n)-H(V_{21}^n)-H(X_2^n)+2H(V_{33}^n)-H(V_{32}^n)-H(X_3^n)\\
 &&\quad \leq H(Y_1^n)+H(Y_2^n)+H(Y_3^n)\\
  &&\quad \leq \sum_{i=1}^n \Big (H(Y_{1i})+H(Y_{2i})+H(Y_{3i})\Big)\\
  &&\quad = n\Big(H(Y_1|Q)+H(Y_2|Q)+H(Y_3|Q)\Big).
\end{eqnarray}
\end{subequations}
Remaining bounds can be evaluated similarly.
\end{enumerate}
\end{proof}
\section{Gaussian IC}
\label{sec:gau}
We now consider the GDOF of GIC defined in~\eqref{eq:ch}. To derive the GDOF, we will use $\mathcal{O}(1)$ approximation. We say that $f(x)=g(x)+\mathcal{O}(1)$ when $\lim_{x \rightarrow \infty} |f(x)-g(x)|<\infty$. Note that the GDOF optimality still allows infinite gap to capacity, but $\mathcal{O}(1)$ gap implies the finite gap. Similar to the result in~\cite{GoJa11}, the result obtained in this report is actually stronger than the GDOF because of this $\mathcal{O}(1)$ nature. To analyze behavior of MIMO IC with high SNR, we need the following lemma which is given in~\cite{KaVa11, MoMu11}.
\begin{lemma}
\label{lem:dim}
\cite{MoMu11} Suppose $H_1,H_2,H_3$ are $N \times r$ matrices with rank $r$. When $\alpha>\beta>\gamma$, we have 
\begin{eqnarray}
&&\log |I+\rho^\alpha H_1H_1^H+\rho^\beta H_2H_2^H+\rho^\gamma H_3H_3^H|\nonumber\\
&&\quad = r\alpha\log\rho + \min(r,(N-r)^+)\beta\log\rho + \min(r,(N-2r)^+)\gamma\log\rho + \mathcal{O}(1).
\end{eqnarray}
\end{lemma}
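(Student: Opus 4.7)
I plan to peel off the three powers of $\rho$ in decreasing order using the identity
\[
\log\lvert I + A + B\rvert = \log\lvert I + A\rvert + \log\bigl\lvert I + (I+A)^{-1/2}\, B\, (I+A)^{-1/2}\bigr\rvert,
\]
valid for Hermitian $B$ and positive semi-definite $A$, and then recurse on the second term. Applying it with $A=\rho^{\alpha}H_1 H_1^{H}$ and $B=\rho^{\beta}H_2 H_2^{H}+\rho^{\gamma}H_3 H_3^{H}$, the first factor is immediate: since the $r$ nonzero eigenvalues of $H_1 H_1^{H}$ are $\Theta(1)$ in $\rho$,
\[
\log\lvert I+\rho^{\alpha}H_1 H_1^{H}\rvert = r\alpha\log\rho + \mathcal{O}(1).
\]

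For the second factor, I diagonalize $H_1 H_1^{H}$ in an orthonormal basis whose first $r$ vectors span $\mathrm{range}(H_1)$. In this basis $(I+\rho^{\alpha}H_1 H_1^{H})^{-1/2} = P_1^{\perp} + R_\rho$, where $P_1^{\perp}$ is the projector onto $\mathrm{range}(H_1)^{\perp}$ and $\lVert R_\rho\rVert = \mathcal{O}(\rho^{-\alpha/2})$. Substituting, the residual contributions generated by $R_\rho$ to $(I+A)^{-1/2} B (I+A)^{-1/2}$ have operator norm $\mathcal{O}(\rho^{\beta-\alpha})$, which is $o(1)$ by $\alpha>\beta$. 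A standard log-determinant perturbation bound of the form $\lvert \log|I+M|-\log|I+M+E|\rvert = \mathcal{O}(\lVert E\rVert)$ for $\lVert E\rVert\to 0$ then gives
\[
\log\bigl\lvert I+(I+A)^{-1/2} B (I+A)^{-1/2}\bigr\rvert = \log\bigl\lvert I + \rho^{\beta}\tilde H_2\tilde H_2^{H}+\rho^{\gamma}\tilde H_3\tilde H_3^{H}\bigr\rvert+\mathcal{O}(1),
\]
where $\tilde H_i = P_1^{\perp} H_i$ effectively lives on the $(N-r)^+$-dimensional complement $\mathrm{range}(H_1)^{\perp}$.

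I then recurse. By the no-degeneracy assumption on the channel coefficients, $\tilde H_2$ and $\tilde H_3$ attain maximal rank $\min(r,(N-r)^{+})$. Applying the same peeling identity a second time contributes $\min(r,(N-r)^{+})\,\beta\log\rho+\mathcal{O}(1)$ and reduces the problem to $\log\bigl\lvert I+\rho^{\gamma} H_3'' (H_3'')^{H}\bigr\rvert$, where $H_3''$ is the further projection of $H_3$ onto $\mathrm{range}([H_1,H_2])^{\perp}$, a subspace of dimension $(N-2r)^{+}$ on which $H_3''$ has rank $\min(r,(N-2r)^{+})$ (again by genericity). This last term contributes $\min(r,(N-2r)^{+})\,\gamma\log\rho+\mathcal{O}(1)$, and summing the three contributions yields the claim.

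The main technical obstacle is the $\mathcal{O}(1)$ accounting in the projection step: one must confirm that replacing $(I+\rho^{\alpha}H_1 H_1^{H})^{-1/2}$ by the exact projector $P_1^{\perp}$ introduces only bounded error in a log-determinant whose own magnitude grows polynomially in $\rho$. This goes through precisely because the residual contributions, once sandwiched with $\rho^{\beta}H_2 H_2^{H}+\rho^{\gamma}H_3 H_3^{H}$, become $o(1)$ in operator norm thanks to the strict inequalities $\alpha>\beta>\gamma$. A parallel genericity check is also required at each peeling stage to verify that the projected matrices attain the claimed ranks.
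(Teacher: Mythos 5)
First, a remark on the comparison itself: the paper does not prove this lemma — it is imported verbatim from the cited reference — so there is no in-paper argument to measure your proposal against. Your peeling strategy (extract $\log|I+\rho^{\alpha}H_1H_1^{H}|=r\alpha\log\rho+\mathcal{O}(1)$, replace the inverse square root by the projector onto $\mathrm{range}(H_1)^{\perp}$, recurse on the lower-order terms) is the standard route to this kind of statement and does lead to the stated expression. Your genericity caveat is also necessary and correctly placed: with only the hypothesis $\mathrm{rank}(H_i)=r$ the formula is false (take $H_1=H_2$, so that $P_1^{\perp}H_2=0$ and the $\beta$ term disappears), so the lemma implicitly relies on the paper's blanket non-degeneracy assumption, read as saying that concatenations such as $[H_1\ H_2]$ and $[H_1\ H_2\ H_3]$ have maximal rank.

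The genuine gap is in the error accounting of the first peeling step. Writing $(I+A)^{-1/2}=P_1^{\perp}+R_\rho$ with $\|R_\rho\|=\mathcal{O}(\rho^{-\alpha/2})$ and expanding $(P_1^{\perp}+R_\rho)B(P_1^{\perp}+R_\rho)$, the quadratic term $R_\rho BR_\rho$ indeed has norm $\mathcal{O}(\rho^{\beta-\alpha})$, but the cross terms satisfy only $\|P_1^{\perp}BR_\rho\|=\mathcal{O}(\rho^{\beta-\alpha/2})$, which diverges whenever $\beta>\alpha/2$ (e.g.\ $\alpha=1$, $\beta=0.9$). So the residual is not $o(1)$ in operator norm, your log-determinant perturbation bound cannot be invoked, and the $\mathcal{O}(1)$ control of the projection step fails as written. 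The repair is standard: move the inverse to the middle before approximating, e.g.\ via $\log|I+SBS|=\log\bigl|I+B^{1/2}S^{2}B^{1/2}\bigr|$, or, writing $B=CC^{H}$ with $C=[\rho^{\beta/2}H_2\ \ \rho^{\gamma/2}H_3]$, via $|I+(I+A)^{-1}CC^{H}|=|I+C^{H}(I+A)^{-1}C|$. Then the object being perturbed is $(I+A)^{-1}=P_1^{\perp}+D_\rho$ with $\|D_\rho\|=\mathcal{O}(\rho^{-\alpha})$ (the full inverse decays twice as fast as its square root), the perturbation $B^{1/2}D_\rho B^{1/2}$ is positive semi-definite of norm $\mathcal{O}(\rho^{\beta-\alpha})=o(1)$, and the two-sided bound $\log|I+X|\le\log|I+X+Y|\le\log|I+X|+N\log(1+\|Y\|)$ for positive semi-definite $X,Y$ gives the desired conclusion. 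The same correction is needed at the second peeling stage; with it, your argument goes through.
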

The above lemma essentially says that we can retain full DOF provided by exponents of $\rho$ if there is enough dimension. This is an important property in high SNR regime, and this is why we have~\eqref{eq:self} and~\eqref{eq:int} for the deterministic model. Now we are ready to present the GDOF of GIC. 
\subsection{Case of $\alpha_1<1$}
\begin{theorem}
GDOF of Gaussian IC with $2M\leq N<3M$ and $\alpha_1<1$ is given as 
\begin{eqnarray}
\label{eq:gdof}
&&d_{sym}(\alpha_1,\alpha_2)=\nonumber\\
&&\quad \min \bigg\{ \max\Big\{M+(N-3M)\alpha_2, M+(N-3M)\alpha_1+(3M-N)\alpha_2, (3M-N)\alpha_1+N-2M    \Big\},\nonumber\\
&&\qquad  \qquad \max\Big\{M+\frac{1}{2}(N-3M)\alpha_2, \frac{1}{2}(N-M)+\frac{1}{2}(3M-N)\alpha_2\Big\},\nonumber\\
&&\qquad \qquad M+\frac{1}{3}(N-3M)\alpha_2   \bigg\}.
\end{eqnarray}
\end{theorem}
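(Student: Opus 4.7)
The plan is to follow the deterministic-model analysis of Theorem~\ref{thm:detcap} in the Gaussian setting, converting each conditional mutual information into a log-determinant and evaluating it via Lemma~\ref{lem:dim}. For achievability I would use a three-layer Gaussian superposition input
$$\underline{x}_i=\underline{w}_i^{(c_1)}+\underline{w}_i^{(c_2)}+\underline{w}_i^{(p)},$$
with the three independent zero-mean Gaussian layers playing the roles of $V_{ij}$ (large common, decoded by the receiver that sees it at exponent $\rho^{\alpha_1}$), $V_{ik}$ (small common, decoded at exponent $\rho^{\alpha_2}$), and the private part. A standard HK-style scaling---private power $\rho^{-\alpha_1}$, small-common power $\rho^{\alpha_2-\alpha_1}$, large-common power $\Theta(1)$---places each unintended user's private layer at or below the noise floor of every receiver that is not its intended destination, and pushes the small-common layer to the noise floor at the receiver that treats it as noise. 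I would then substitute these covariances into the six reduced achievability inequalities in~\eqref{eq:sumbnd} and apply Lemma~\ref{lem:dim} termwise, so that every mutual information collapses to a polynomial in $\alpha_1,\alpha_2,M,N$ plus $\mathcal{O}(1)$. Summing and normalizing yields six explicit upper bounds on $d_{sym}$; after optimizing over the remaining power-allocation freedom and collecting regimes, these reduce to the three terms inside the outer minimum in~\eqref{eq:gdof}, with the inner maxima reflecting which layer allocation is tight for a given ordering of $\alpha_1$, $\alpha_2$, and $1$.

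For the converse, the deterministic proof of Theorem~\ref{thm:detcap} has already identified the correct side information: for each of the six bounds, receiver $i$ is given the Gaussian analogues of exactly the $V_{jk}$'s used in the deterministic bound. I would replay the Fano chain of that proof, replacing discrete entropies by conditional differential entropies, upper-bounding each one by its Gaussian maximum, and invoking Lemma~\ref{lem:dim} with the same exponent triples $(\alpha,\beta,\gamma)$ that arise in achievability. Each converse term then matches its achievability counterpart up to $\mathcal{O}(1)$, which is exactly what the $\mathcal{O}(1)$-capacity approximation in the theorem requires.

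The main technical obstacle is establishing the high-SNR Gaussian analogue of the deterministic-model assumptions~\eqref{eq:self} and~\eqref{eq:int}: that conditioning on the Gaussian surrogate of $V_{jk}$ actually removes the corresponding power layer from the conditional differential entropy up to $\mathcal{O}(1)$, leaving a residual log-determinant whose rank structure matches the hypotheses of Lemma~\ref{lem:dim}. This is where the assumption $2M\leq N<3M$ is used in a nontrivial way: the $N-2M$ residual receive dimensions are what allow the private layer to contribute the full $M$ DOF after nulling the two rank-$M$ common interferers, while $N<3M$ is what prevents the three common streams from being jointly separable and is ultimately what makes the max-inside-min structure of~\eqref{eq:gdof} genuine rather than degenerate. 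Verifying that Lemma~\ref{lem:dim} applies with the correct exponent ordering and ranks in each of the six bounds, and that the power-allocation optimization selects the claimed inner maxima, is where I expect the bulk of the case work to lie.
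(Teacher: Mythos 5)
Your achievability plan matches the paper's (three-layer Gaussian superposition keyed to the deterministic model, evaluated via Lemma~\ref{lem:dim}), but two details are off. First, the small-common layer must carry power $\rho^{-\alpha_2}$ (the paper uses covariance $(\rho^{-\alpha_2}-\rho^{-\alpha_1})I$), not $\rho^{\alpha_2-\alpha_1}$: with your choice it arrives at the weak cross-receiver at level $\rho^{2\alpha_2-\alpha_1}$, which sits at the noise floor only when $\alpha_1=2\alpha_2$, so the "treat as noise" step breaks. Second, there is no residual power-allocation optimization producing the inner maxima in \eqref{eq:gdof}; the power split is fixed by the noise-floor rule, and the maxima arise because the prelogs of the sum-rate constraints in \eqref{eq:sumbnd} (e.g.\ the $R_{c_1}+R_{c_2}$ and $2R_{c_1}+2R_{c_2}$ bounds) take different values in different orderings of $\alpha_1,\alpha_2,1$ via the rank-deficiency terms $\min(r,(N-r)^+)$, $\min(r,(N-2r)^+)$ of Lemma~\ref{lem:dim}. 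These are fixable slips, not conceptual ones.

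The genuine gap is in the converse. You propose to replay the deterministic Fano chain, "upper-bounding each [conditional differential entropy] by its Gaussian maximum." But after inserting the side information $S^n_{\mathcal{B},i}$ and expanding, the bounds contain differential entropies with \emph{negative} sign, e.g.\ $-h(S^n_{\{2,3\},1})$ and, for the second term of \eqref{eq:gdof}, residual combinations such as $h(S^n_{2,1})+h(S^n_{3,1})-h(S^n_{\{2,3\},1})$ that do not telescope away in the cyclic sum (unlike the symmetric case, where relabeling $S^n_{1,2}\leftrightarrow S^n_{1,3}$ forces cancellation). Replacing the positively signed terms by their Gaussian maxima says nothing about these; one needs the inequality $h(S^n_{2,1})+h(S^n_{3,1})-h(S^n_{\{2,3\},1})\leq n\big(h(S^G_{2,1})+h(S^G_{3,1})-h(S^G_{\{2,3\},1})\big)$, which is false in general and holds here only via the vector entropy power inequality of \cite{Bl65} (the MIMO extension of \cite[Lemma 5]{AnVe09}) under identity input covariance, at an $\mathcal{O}(1)$ cost. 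Relatedly, the identity $h(S^n_{\{1,3\},2})-h(S^n_{3,2})=h(S^n_{1,2})$ that the deterministic assumption \eqref{eq:intdec} would give cannot be assumed for arbitrary inputs, so a literal transcription of the discrete chain is not available. You also identify the main obstacle as the Gaussian analogue of \eqref{eq:self}--\eqref{eq:int}, but that is handled directly by Lemma~\ref{lem:dim} once inputs are Gaussian; the step your argument actually cannot complete is the one above. Finally, the third term of \eqref{eq:gdof} is a many-to-one bound requiring a separate construction (stacking receivers 2 and 3 into a single interference-free super-receiver and giving it $S^n=\underline{H}_{21}\underline{X}_2^n+Z_1^n$), which your uniform side-information recipe does not produce.
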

\begin{proof}
\begin{enumerate}
\item Achievability\\
\begin{enumerate}
\item Codebook generation and encoding\\
The idea essentially the same as the deterministic model. Transmitter $i$ splits its message $W_i$ into $W_{ic_1},W_{ic_2},$\\
$W_{ip}$. $W_{ic_1}$ is encoded using a Gaussian codebook with rate $R_{c_1}$ and covariance $(1-\rho^{-\alpha_2})I$. $W_{ic_2}$ is encoded using a Gaussian codebook with rate $R_{c_2}$ and covariance $(\rho^{-\alpha_2}-\rho^{-\alpha_1})I$. $W_{ip}$ is encoded using a Gaussian codebook with rate $R_{p}$ and covariance $\rho^{-\alpha_1}I$. This power splitting is essentially the same as in~\cite{EtTsWa08, GoJa11}. It can be easily seen that $W_{ip}$ will reach the receiver with the channel strength $\rho^{\alpha_1}$ at the noise level, and $W_{ic_2}+W_{ip}$ will reach the receiver with the channel strength $\rho^{\alpha_2}$ at the noise level. Therefore, each receiver treats those messages as noise. 
\item Decoding\\
Similar to deterministic model, we consider successive decoding. Common messages jointly decoded first while treating other messages as noise, and the private message of the intended transmitter is decoded.  
\item Error analysis\\
We obtain bounds on achievable rate by analyzing error events. It suffices to consider receiver 1 only due to symmetric nature of the channel. First, we have the following relevant single rate bounds.
\begin{subequations}
\begin{eqnarray}
R_p&<&\log \bigg|I+\Big(I+\sum_{i\neq 1}H_{i1}H_{i1}^H\Big)^{-1}\rho^{1-\alpha_1}H_{11}H_{11}^H\bigg|\\
   &=&M(1-\alpha_1)\log \rho +\mathcal{O}(1).
\end{eqnarray}
\end{subequations}
\begin{subequations}
\begin{eqnarray}
R_{c_2}&<&\log \bigg|I+\Big(I+\rho^{1-\alpha_1}H_{11}H_{11}^H+\sum_{i\neq 1}H_{i1}H_{i1}^H\Big)^{-1}(\rho^{1-\alpha_2}-\rho^{1-\alpha_1})H_{11}H_{11}^H\bigg|\\
   &=&M(\alpha_1-\alpha_2)\log \rho +\mathcal{O}(1).
\end{eqnarray}
\end{subequations}
\begin{subequations}
\begin{eqnarray}
R_{c_1}&<&\log \bigg|I+\Big(I+\rho^{1-\alpha_1}H_{11}H_{11}^H+\sum_{i\neq 1}H_{i1}H_{i1}^H\Big)^{-1}(\rho^{\alpha_2}-1)H_{21}H_{21}^H\bigg|\\
   &=&M\alpha_2\log \rho +\mathcal{O}(1).
\end{eqnarray}
\end{subequations}
\begin{subequations}
\begin{eqnarray}
R_{c_2}&<&\log \bigg|I+\Big(I+\rho^{1-\alpha_1}H_{11}H_{11}^H+\sum_{i\neq 1}H_{i1}H_{i1}^H\Big)^{-1}(\rho^{\alpha_1-\alpha_2}-1)H_{31}H_{31}^H\bigg|\\
   &=&M(\alpha_1-\alpha_2)\log \rho +\mathcal{O}(1).
\end{eqnarray}
\end{subequations}
Note that prelog factors of two different bounds on $R_{c_2}$ are the same. This is similar to what happened in deterministic model. The sum rate bound of all common messages is given as 
\begin{subequations}
\begin{eqnarray}
3R_{c_1}+2R_{c_2}&<&\log \bigg|I+\Big(I+\rho^{1-\alpha_1}H_{11}H_{11}^H+\sum_{i\neq 1}H_{i1}H_{i1}^H\Big)^{-1}\nonumber \\
&&\qquad \times \Big((\rho-\rho^{1-\alpha_1})H_{11}H_{11}^H+(\rho^{\alpha_2}-1)H_{21}H_{21}^H+(\rho^{\alpha_1}-1)H_{31}H_{31}^H\Big)\bigg|\\
   &=&(2M\alpha_1+(N-2M)\alpha_2)\log \rho +\mathcal{O}(1).
\end{eqnarray}
\end{subequations}
There are five more bounds on the achievable rate which are relevant to the GDOF, and the prelog factors of those bounds vary depending on actual values of $\alpha_1$ and $\alpha_2$. This is due to the fact that there is not enough dimension to resolve all messages in those bounds. By carefully evaluating it, it turns out that only two of them are relevant eventually, and they as given as follows. 
\begin{eqnarray}
&&R_{c_1}+R_{c_2}< \nonumber \\
&&\begin{cases}
\Big(M\alpha_1+(N-3M)\alpha_2\Big)\log \rho +\mathcal{O}(1), &\text{ if } \alpha_1+\alpha_2<1, 2\alpha_2<\alpha_1\\
\Big((4M-N)\alpha_1+N-3M\Big)\log \rho +\mathcal{O}(1), &\text{ if } \alpha_1+\alpha_2>1, 2\alpha_1-\alpha_2>1\\
\Big((N-2M)\alpha_1+(3M-N)\alpha_2\Big)\log \rho +\mathcal{O}(1), &\text{ if } 2\alpha_1-\alpha_2<1, 2\alpha_2>\alpha_1.
\end{cases}
\end{eqnarray}
\begin{eqnarray}
&&2R_{c_1}+2R_{c_2}< \nonumber \\
&&\begin{cases}
\Big(2M\alpha_1+(N-3M)\alpha_2\Big)\log \rho +\mathcal{O}(1), &\text{ if } \alpha_2<\frac{1}{2}\\
\Big(2M\alpha_1+(3M-N)\alpha_2+N-3M\Big)\log \rho +\mathcal{O}(1), &\text{ if } \alpha_2>\frac{1}{2}.
\end{cases}
\end{eqnarray}
Then, we get the expression for the GDOF given as~\eqref{eq:gdof}.
\end{enumerate}
\item Converse\\
We proceed as deterministic model, i.e., we show each term of~\eqref{eq:gdof} is the upper bound. Let us consider the first term. Let $S^n_{\mathcal{B},i}=\sum_{j \in \mathcal{B}}H_{ji}X^n_{j}+Z^n_i$. We first need to determine side information given to receivers. Note that the first term of~\eqref{eq:gdof} corresponds to the first term of~\eqref{eq:detcap}. Consider now the first term of~\eqref{eq:detcap} as 
\begin{subequations}
\begin{eqnarray}
&& R_1+R_2+R_3 \nonumber\\
&&\quad < I(V_{31},V_{21};Y_1|V_{12},V_{32},Q)+I(X_1;Y_1|V_{12},V_{21},V_{31},Q)\nonumber\\
&&\qquad  +I(V_{12},V_{32};Y_2|V_{13},V_{23},Q)+I(X_2;Y_2|V_{12},V_{23},V_{32},Q)\nonumber\\
&&\qquad  +I(V_{13},V_{23};Y_3|V_{21},V_{31},Q)+I(X_3;Y_3|V_{13},V_{23},V_{31},Q)\\
&&\quad =H(Y_1|V_{12},V_{32},Q)+ H(Y_2|V_{13},V_{23},Q)+H(Y_3|V_{21},V_{31},Q).
\end{eqnarray}
\end{subequations}
This suggests that $V_{12},V_{32}$ need to be given to the receiver 1 as side information, and $V_{13},V_{23}$ to the receiver 2, $V_{21},V_{31}$ to the receiver 3. 
Then, in Gaussian case,
\begin{subequations}
\label{eq:ind}
\begin{eqnarray}
&& n(R_1+R_2+R_3)\nonumber\\
&&\quad \leq I(X_1^n;Y_1^n)+I(X_2^n;Y_2^n)+I(X_3^n;Y_3^n)\\
&&\quad \leq I(X_1^n;Y_1^n,S^n_{\{1,3\},2})+I(X_2^n;Y_2^n,S^n_{\{1,2\},3})+I(X_3^n;Y_3^n,S^n_{\{2,3\},1}).
\end{eqnarray}
\end{subequations}
Consider now $I(X_1^n;Y_1^n,S^n_{\{1,3\},2})$ only. Process for other terms are equivalent to that for $I(X_1^n;Y_1^n,S^n_{\{1,3\},2})$.
\begin{subequations}
\label{eq:up1}
\begin{eqnarray}
&&I(X_1^n;Y_1^n,S^n_{\{1,3\},2})\nonumber\\
&&\quad =h(S^n_{\{1,3\},2})-h(S^n_{3,2})+h(Y_1^n|S^n_{\{1,3\},2})-h(S^n_{\{2,3\},1}|S^n_{3,2})+\mathcal{O}(1)\\
&&\quad =h(Y_1^n|S^n_{\{1,3\},2})+h(S^n_{\{1,3\},2})-h(S^n_{\{2,3\},1},S^n_{3,2})+\mathcal{O}(1)\\
&&\quad = h(Y_1^n|S^n_{\{1,3\},2})+h(S^n_{\{1,3\},2})-h(S^n_{\{2,3\},1})+\mathcal{O}(1).
\end{eqnarray}
\end{subequations}
By evaluating other terms in similar ways, we get 
\begin{subequations}
\begin{eqnarray}
&&n(R_1+R_2+R_3)\nonumber\\
&&\quad \leq h(Y_1^n|S^n_{\{1,3\},2})+h(S^n_{\{1,3\},2})-h(S^n_{\{2,3\},1})+h(Y_2^n|S^n_{\{1,2\},3})+h(S^n_{\{1,2\},3})-h(S^n_{\{1,3\},2})\nonumber\\
&&\qquad +h(Y_3^n|S^n_{\{2,3\},1})+h(S^n_{\{2,3\},1})-h(S^n_{\{1,2\},3})+\mathcal{O}(1)\\
&&\quad = h(Y_1^n|S^n_{\{1,3\},2})+h(Y_2^n|S^n_{\{1,2\},3})+h(Y_3^n|S^n_{\{2,3\},1})+\mathcal{O}(1)\\
&&\quad \leq  n(h(Y_1^G|S^G_{\{1,3\},2})+h(Y_2^G|S^G_{\{1,2\},3})+h(Y_3^G|S^G_{\{2,3\},1}))+\mathcal{O}(1),
\end{eqnarray}
\end{subequations}
where the supersctipt $G$ denotes the inputs are i.i.d. Gaussian with $tr(E[X_iX_i^H])<M$. The last inequality comes from the fact that $h(Y_1^n|S^n_{\{1,3\},2})\leq nh(Y_1^G|S^G_{\{1,3\},2})$ from~\cite{AnVe09}. We have
\begin{eqnarray}
h(Y_1^G|S^G_{\{1,3\},2})=\log \Big |\pi \Sigma_{Y_1^G|S^G_{\{1,3\},2}}\Big |,
\end{eqnarray}
where 
\begin{eqnarray}
\Sigma_{Y_1^G|S^G_{\{1,3\},2}}=E[Y_1^G(Y_1^G)^H]- E[Y_1^G (S^G_{\{1,3\},2})^H    ]E[S^G_{\{1,3\},2}(S^G_{\{1,3\},2})^H]^{-1}E[S^G_{\{1,3\},2}(Y_1^G)^H].
\end{eqnarray}
Therefore, $h(Y_1^G|S^G_{\{1,3\},2})$ can be evaluated by using Woodbury matrix identity, which is 
\begin{eqnarray}
(A+BCD)^{-1}=A^{-1}-A^{-1}B(C^{-1}+DA^{-1}B)^{-1}DA^{-1}. 
\end{eqnarray}
By proceed in similar ways to~\cite{GoJa11}, we eventually get 
\begin{eqnarray}
&&D_{sym}(\alpha_1,\alpha_2)\nonumber\\
&&\quad\leq \max\Big\{M+(N-3M)\alpha_2, M+(N-3M)\alpha_1+(3M-N)\alpha_2, (3M-N)\alpha_1+N-2M    \Big\}.
\end{eqnarray}
To resolve the second term of~\eqref{eq:gdof}, consider the fifth term of~\eqref{eq:detcap} as
\begin{subequations}
\begin{eqnarray}
&& 2(R_1+R_2+R_3) \nonumber\\
&&\quad < I(V_{12},V_{21},V_{31};Y_1|V_{13},Q) +2I(X_1;Y_1|V_{12},V_{21},V_{31},Q)\nonumber\\
&&\qquad  +I(V_{12},V_{23},V_{32};Y_2|V_{21},Q) +2I(X_2;Y_2|V_{12},V_{23},V_{32},Q)\nonumber\\
&&\qquad  +I(V_{13},V_{23},V_{31};Y_3|V_{32},Q) +2I(X_3;Y_3|V_{13},V_{23},V_{31},Q)\\
&&\quad =H(Y_1|V_{13},Q)+H(Y_1|V_{12},V_{21},V_{31},Q)\nonumber\\
&&\qquad +H(Y_2|V_{21},Q)+H(Y_2|V_{12},V_{23},V_{32},Q)\nonumber\\
&&\qquad +H(Y_3|V_{32},Q)+H(Y_3|V_{13},V_{23},V_{31},Q).
\end{eqnarray}
\end{subequations}
In Gaussian case, we proceed as
\begin{subequations}
\begin{eqnarray}
&&2n(R_1+R_2+R_3)\nonumber\\
&&\quad \leq 2I(X_1^n;Y_1^n)+2I(X_2^n;Y_2^n)+2I(X_3^n;Y_3^n)\\
&&\quad \leq I(X_1^n;Y_1^n,S^n_{1,3})+I(X_1^n;Y_1^n,S^n_{1,2},S^n_{\{2,3\},1})\nonumber\\
&&\qquad +I(X_2^n;Y_2^n,S^n_{2,1})+I(X_2^n;Y_2^n,S^n_{2,3},S^n_{\{1,3\},2})\nonumber\\
&&\qquad +I(X_3^n;Y_3^n,S^n_{3,2})+I(X_3^n;Y_3^n,S^n_{3,1},S^n_{\{1,2\},3}) \\
&&\quad = h(S^n_{1,3})+h(Y_1^n|S^n_{1,3})-h(S^n_{\{2,3\},1})+h(S^n_{1,2})+h(Y_1^n|S^n_{1,2},S^n_{\{2,3\},1})\nonumber\\
          &&\qquad +h(S^n_{2,1})+h(Y_2^n|S^n_{2,1})-h(S^n_{\{1,3\},2})+h(S^n_{2,3})+h(Y_2^n|S^n_{2,3},S^n_{\{1,3\},2})\nonumber\\ 
          &&\qquad +h(S^n_{3,2})+h(Y_3^n|S^n_{3,2})-h(S^n_{\{1,2\},3})+h(S^n_{3,1})+h(Y_3^n|S^n_{3,1},S^n_{\{1,2\},3})+\mathcal{O}(1)\\
&&\quad = h(Y_1^n|S^n_{1,3})+h(Y_1^n|S^n_{1,2},S^n_{\{2,3\},1})+h(Y_2^n|S^n_{2,1})+h(Y_2^n|S^n_{2,3},S^n_{\{1,3\},2})\nonumber\\ 
&&\qquad +h(Y_3^n|S^n_{3,2})+h(Y_3^n|S^n_{3,1},S^n_{\{1,2\},3})+ h(S^n_{2,1})+h(S^n_{3,1})-h(S^n_{\{2,3\},1})\nonumber\\ 
&&\qquad + h(S^n_{1,2})+h(S^n_{3,2})-h(S^n_{\{1,3\},2}) + h(S^n_{1,3})+h(S^n_{2,3})-h(S^n_{\{1,3\},2})+\mathcal{O}(1).
\end{eqnarray}
\end{subequations}
Although we cannot usually say that $h(S^n_{2,1})+h(S^n_{3,1})-h(S^n_{\{2,3\},1})\leq n\Big(h(S^G_{2,1})+h(S^G_{3,1})-h(S^G_{\{2,3\},1})\Big)$, this is true if $X_i$ has identity covariance matrix as given in~\cite{Bl65} and discussed in~\cite{AnVe09}. SISO version of $h(S^n_{2,1})+h(S^n_{3,1})-h(S^n_{\{2,3\},1})\leq n\Big(h(S^G_{2,1})+h(S^G_{3,1})-h(S^G_{\{2,3\},1})\Big)$ is given in Lemma 5 of~\cite{AnVe09}, and the corresponding MIMO version which is needed here can be easily obtained by following the exactly same procedure with identity covariance matrix of $X_i$. Since replacing covariance matrix of $X_i$ with an identity matrix only results in $\mathcal{O}(1)$ gap, we can proceed as
\begin{subequations}
\begin{eqnarray}
&&2n(R_1+R_2+R_3)\nonumber\\
&&\quad \leq  n\Big( h(Y_1^G|S^n_{1,3})+h(Y_1^G|S^G_{1,2},S^G_{\{2,3\},1})+h(Y_2^G|S^G_{2,1})+h(Y_2^G|S^G_{2,3},S^G_{\{1,3\},2})\nonumber\\ 
&&\qquad +h(Y_3^G|S^G_{3,2})+h(Y_3^G|S^G_{3,1},S^G_{\{1,2\},3})\Big)+ n\Big(h(S^G_{2,1})+h(S^G_{3,1})-h(S^G_{\{2,3\},1})\nonumber\\ 
&&\qquad + h(S^G_{1,2})+h(S^G_{3,2})-h(S^G_{\{1,3\},2}) + h(S^G_{1,3})+h(S^G_{2,3})-h(S^G_{\{1,3\},2})\Big)+\mathcal{O}(1)\\
&&\quad = n\Big( h(Y_1^G|S^n_{1,3})+h(Y_1^G|S^G_{1,2},S^G_{\{2,3\},1})+h(Y_2^G|S^G_{2,1})+h(Y_2^G|S^G_{2,3},S^G_{\{1,3\},2})\nonumber\\ 
&&\qquad +h(Y_3^G|S^G_{3,2})+h(Y_3^G|S^G_{3,1},S^G_{\{1,2\},3})\Big)+\mathcal{O}(1).
\end{eqnarray}
\end{subequations}
Now we can proceed in similar ways to to the first bound to get 
\begin{eqnarray}
D_{sym}(\alpha_1,\alpha_2)\leq \max\Big\{M+\frac{1}{2}(N-3M)\alpha_2, \frac{1}{2}(N-M)+\frac{1}{2}(3M-N)\alpha_2\Big\}.
\end{eqnarray}
Consider now the third term of~\eqref{eq:gdof}. As seen in the proof of Theorem~\ref{thm:detcap}, the last term of~\eqref{eq:detcap} can be written as
\begin{eqnarray}
&& R_1+R_2+R_3 \nonumber\\
&&\quad =H(Y_1|Q)+H(Y_2|V_{12},V_{21},V_{32},Q)+H(Y_3|V_{13},V_{23},V_{31},Q).
\end{eqnarray}
Given side information to receivers 2 and 3, messages decoded at these receivers are only from the intended transmitters, and these messages do not contain common information to the receiver 1. Hence, we consider a system in which only receiver 1 sees interference. Let $\underline{Y}_2^n=\underline{H}_2\underline{X^n_2}+\underline{Z}_2$, where $\underline{Y}_2^n=[Y_2^n\quad  Y_3^n]^T$, $\underline{X}_2^n=[X_2^n \quad X_3^n]^T$, $\underline{Z}_2^n=[Z_2^n \quad Z_3^n]^T$, and 
\begin{eqnarray}
\underline{H}_2=\begin{bmatrix}
H_{22} & 0\\
0 & H_{33}
\end{bmatrix}.
\end{eqnarray}
If we define $\underline{H}_{21}=[H_{21} \quad H_{31}]$, then we have $ Y_1^n=H_{11}X_1^n+ \underline{H}_{21}\underline{X}_2^n+Z^n_1.$ Let $S^n=\underline{H}_{21}\underline{X}_2^n+Z^n_1.$ An upper bound on achievable rate of this channel is also an upper bound of the channel in~\eqref{eq:ch}. Then,
\begin{subequations}
\label{eq:sum}
\begin{eqnarray}
&&n(R_1+R_2+R_3)\nonumber\\
&&\quad \leq I(X_1^n;Y_1^n)+I(\underline{X}_2^n;\underline{Y}_2^n,S^n )\\
&&\quad =h(Y_1^n)-h(S^n)+h(S^n)+h(\underline{Y}_2^n|S^n)+\mathcal{O}(1)\\
&&\quad =h(Y_1^n)+h(\underline{Y}_2^n|S^n)+\mathcal{O}(1).
\end{eqnarray}
\end{subequations}
Now we can proceed in similar ways to the previous cases to get 
\begin{eqnarray}
D_{sym}(\alpha_1,\alpha_2)\leq M+\frac{1}{3}(N-3M)\alpha_2.
\end{eqnarray}
\end{enumerate}
\end{proof}
It can be seen that evaluation of $I(X_1^n;Y_1^n,S^n_{\{1,3\},2})$ in~\eqref{eq:up1} is more complicated than its deterministic counter part. If we have an assumption of~\eqref{eq:intdec} as in the deterministic model, then we would have $h(S^n_{\{1,3\},2})-h(S^n_{3,2})=h(S^n_{1,2})$. Although this becomes eventually true when we replace everything with Gaussian, we cannot assume such property at that point. This makes analysis on Gaussian model significantly more involved than the deterministic model. \\
\indent 
In~\cite{GoJa11}, two active outer bounds are many-to-one bound and bound on all common messages from unintended transmitters. These are counterparts of two outer bounds in~\cite{EtTsWa08} for symmetric case. Note that a many-to-one bound is still an active outer bound in \textit{partially asymmetric} model, but other two bounds in this model do not have clear counterparts in aforementioned bounds. The second outer bound, however, has similar way of derivation to the bound on all common messages from unintended transmitters.~\cite[Lemma 5]{GoJa11}. Note that an important property used in~\cite{GoJa11} is symmetry, e.g., replacing $S^n_{1,2}$ with $S^n_{1,3}$ helps resolving terms. In \textit{partially asymmetric} model, however, it does not help, and hence, vector entropy power inequality~\cite{Bl65} is needed to further approximate the upper bound. \\
\indent
Let us now further evaluate GDOF expression in~\eqref{eq:gdof}. By careful evaluation, we get
\begin{eqnarray}
\label{eq:gdofsim}
&&\hspace{-20pt}D_{sym}(\alpha_1,\alpha_2)=\nonumber\\
&&\hspace{-20pt}\begin{cases}
M+(N-3M)\alpha_2 &\hspace{-10pt}\text{ if } \alpha_1+\alpha_2<1, 2\alpha_2<\alpha_1\\
\min \Big\{M+ (N-3M)\alpha_1 +(3M-N)\alpha_2, M+ \frac{1}{2}(N-3M)\alpha_2 \Big\}&\hspace{-10pt}\text{ if } 2\alpha_1-\alpha_2<1, 2\alpha_2>\alpha_1, \alpha_2<\frac{1}{2}\\
\min \Big\{N-2M+ (3M-N)\alpha_1, M+ \frac{1}{2}(N-3M)\alpha_2 \Big\}&\hspace{-20pt}\text{ if } \alpha_1+\alpha_2>1, 2\alpha_1-\alpha_2>1, \alpha_2<\frac{1}{2}\\
\min \Big\{\frac{1}{2}(N-M)+ \frac{1}{2}(3M-N)\alpha_2, M+ \frac{1}{3}(N-3M)\alpha_2 \Big\}&\hspace{-10pt}\text{ if } \alpha_1+\alpha_2>1, \alpha_2>\frac{1}{2}.
\end{cases}
\end{eqnarray}
Note that the first term $M+(N-3M)\alpha_2$ is the DOF which can be obtained by treating interference as noise (IAN). In 3-user symmetric case considered in~\cite{GoJa11}, the optimal GDOF is strictly larger than the GDOF obtained by IAN for all interference regimes while 2-user SISO case in~\cite{EtTsWa08} also has interference regime in which IAN is GDOF optimal. It is argued in~\cite{GoJa11} that the reason why there is no interference regime in which IAN is optimal in 3-user symmetric case is because there are multiple receiver antennas. In SISO case, there is no other dimension to resolve interference common information when interference strength is weak enough to be affected by private message strength of the intended user. In SIMO case, however, there always are other dimensions to exploit when interference strength is weak. Then why do we see different behavior in asymmetric case? If we look at the condition for which IAN is GDOF optimal, then we can see that $\alpha_2$ must be small with $\alpha_1$ being at least as twice as larger than $\alpha_2$, and both $\alpha_1$ and $\alpha_2$ cannot be too large. Therefore, we may think that this comes from the difficulty of decoding $W_{c_1}$ from weaker interference channel due to strong interference from stronger channel and the private message of the intended transmitter. If $\alpha_1$ is large enough, then the power of the private message of the intended transmitter becomes small enough. The condition $\alpha_1+\alpha_2<1$, however, prevents it which would result in no common message through weaker interference link. When signal from weaker interference link is treated as noise, the receiver has enough dimension to resolve all remaining messages, and hence IAN is optimal.\\
\indent
Another observation can be made with the bounds corresponding to the first and the third term in~\eqref{eq:gdof}. From~\eqref{eq:gdofsim}, we can see that the third bound which is called many-to-one bound is only active when $\alpha_1+\alpha_2>1, \alpha_2>\frac{1}{2}$ which corresponds to stronger interference. If we look at side information given in~\eqref{eq:sum}, then we can see that the actual constraint comes from decodability of all common messages at receiver 1. If we look at side information given in~\eqref{eq:ind}, then the actual constraint does not depend on decodability of common messages of the intended transmitter. In other words, limiting factor in this case is the decodability of common messages from unintended transmitters given correct decoding of common messages from the intended transmitter. Hence, this bound must be active for weaker interference while many-to-one bound is active for stronger interference. 
\subsection{Case of $\alpha_2<1<\alpha_1$}
\begin{theorem}
\label{thm:gdof2}
GDOF of Gaussian IC with $2M\leq N<3M$ and $\alpha_2<1<\alpha_1$ is given as 
\begin{eqnarray}
\label{eq:gdof2}
&&d_{sym}(\alpha_1,\alpha_2)=\nonumber\\
&&\quad \min \bigg\{ M,  \frac{2M+M\alpha_1+(N-3M)\alpha_2}{3},\nonumber\\
&&\qquad  \qquad \max\Big\{ \frac{M+M\alpha_1+(N-3M)\alpha_2}{2}, \frac{M+(N-2M)\alpha_1+(3M-N)\alpha_2}{2}            \Big\} \bigg\}.
\end{eqnarray}
\end{theorem}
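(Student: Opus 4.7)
The plan is to follow the template of the preceding Gaussian proof for $\alpha_1<1$, now anchored to the deterministic model~\eqref{eq:det2} and its symmetric capacity~\eqref{eq:detcap2}. For achievability I split $W_i$ into only two common layers $W_{ic_1}$ (rate $R_{c_1}$, Gaussian codebook of covariance $(1-\rho^{-\alpha_2})I$) and $W_{ic_2}$ (rate $R_{c_2}$, covariance $\rho^{-\alpha_2}I$), omitting the private layer because $\alpha_1>1$ forces the strong interferer's signal to be decoded in full at the unintended receiver (a private layer at power $\rho^{-\alpha_1}$ would carry no rate since $\rho^{1-\alpha_1}<1$). The weak cross link (strength $\rho^{\alpha_2}$) then sees $W_{ic_2}$ at the noise floor and treats it as noise, while the self link and the strong cross link (strength $\rho^{\alpha_1}$) decode both layers. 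At each receiver I jointly decode the decodable common messages and treat each weak interferer's $W_{jc_2}$ as noise; Lemma~\ref{lem:dim} turns every one of the seven achievability inequalities appearing in the proof of Theorem~\ref{thm:detcap2} into an explicit prelog in $\alpha_1,\alpha_2,M,N$. Combining them yields five sum-rate constraints parallel to~\eqref{eq:detcap2}, and a case analysis in the regime $\alpha_2<1<\alpha_1$ shows that they collapse to exactly the three terms of~\eqref{eq:gdof2}, with the two candidate 2R prelogs produced by Lemma~\ref{lem:dim} (whose split is driven by the sign of $\alpha_1-2\alpha_2$, since $3M-N>0$) accounting for the inner max.

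For the converse, the first term $M$ is the single-user MIMO bound $R_1\le I(X_1;Y_1|X_2,X_3)=M\log\rho+\mathcal{O}(1)$, matching the first line of~\eqref{eq:detcap2}. For the 3R term $(2M+M\alpha_1+(N-3M)\alpha_2)/3$ I translate the fifth line of~\eqref{eq:detcap2} into Gaussian form by giving receiver $i$ side information that plays the role of its extra $V_{ik}$-style piece, specifically a weak-cross observation $S^n_{\{i\},k}$ of the intended transmitter at a third receiver together with the strong interferer's signal (in place of $X_3$). Three copies of Fano, one per receiver and related by cyclic symmetry, let the ``non-$Y_i^n$'' entropy differences telescope around the $1\to 2\to 3\to 1$ cycle; the surviving $h(Y_i^n|\cdot)$ terms are then upper bounded by their Gaussian-input versions via the argument of~\cite{AnVe09} and evaluated with the Woodbury identity and Lemma~\ref{lem:dim}.

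The main effort is the 2R bound, whose inner max reflects the third and fourth lines of~\eqref{eq:detcap2} producing two distinct Gaussian prelogs when $\alpha_1>1$. Following the template of the second outer bound from the $\alpha_1<1$ proof, I write $2(R_1+R_2+R_3)$ as six conditional mutual informations with paired side information at each receiver (a single $S^n_{\{k\},i}$ alongside a composite pair $S^n_{\{k'\},i},S^n_{\{j,l\},i}$), cancel the cross entropy differences so that only the $h(Y_i^n|\cdot)$ blocks and a cyclic block of scalar entropy differences survive, and invoke the vector entropy power inequality of~\cite{Bl65} in the MIMO form of~\cite[Lemma 5]{AnVe09}, which remains valid up to an $\mathcal{O}(1)$ gap once $X_i$ is replaced by identity-covariance Gaussians. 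The hard part sits exactly here: because $\alpha_1>1$ alters the effective dimensionality of the composite side information $S^n_{\{j,l\},i}$, Lemma~\ref{lem:dim} returns two different prelogs depending on how $\alpha_1$ interacts with $M$, $N$, and $\alpha_2$, and both must be established as valid upper bounds in their own sub-regions (separated by $\alpha_1=2\alpha_2$) to recover the inner max in~\eqref{eq:gdof2}. Matching with the corresponding achievability bounds in each sub-region then closes the $\mathcal{O}(1)$-gap characterization.
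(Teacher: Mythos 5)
Your proposal follows essentially the same route as the paper: two-layer common-message splitting with the same power allocation and joint decoding for achievability (with the $2R$ prelog case split at $\alpha_1=2\alpha_2$), and a converse that reads side information off the deterministic capacity~\eqref{eq:detcap2}, cancels entropy differences across cyclic Fano copies, and closes with the vector entropy inequality of~\cite{Bl65},~\cite{AnVe09} together with Lemma~\ref{lem:dim}. The only detail that differs is the side information for the $2R$ bound: the paper pairs two single observations per receiver ($S^n_{3,2}$ and $S^n_{1,2}$ at receiver 1, the latter standing in for $X_1^n$ itself as in the deterministic converse of Theorem~\ref{thm:detcap2}), rather than the composite $S^n_{\{j,l\},i}$ structure you import from the $\alpha_1<1$ case, but this does not change the substance of the argument.
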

\begin{proof}
\begin{enumerate}
\item Achievability\\
\begin{enumerate}
\item Codebook generation and encoding\\
Transmitter $i$ splits its message $W_i$ into $W_{ic_1},W_{ic_2}$. $W_{ic_1}$ is encoded using a Gaussian codebook with rate $R_{c_1}$ and covariance $(1-\rho^{-\alpha_2})I$. $W_{ic_2}$ is encoded using a Gaussian codebook with rate $R_{c_2}$ and covariance $\rho^{-\alpha_2}I$. 
\item Decoding\\
Similar to deterministic model, all messages are decoded jointly. 
\item Error analysis\\
We obtain bounds on achievable rate by analyzing error events. By proceeding similarly to the case of $\alpha_1<1$, we get the following bounds which are relevant for GDOF. 
\begin{eqnarray}
R_{c_2}&<&M(1-\alpha_2)\log \rho +\mathcal{O}(1).
\end{eqnarray}
\begin{eqnarray}
R_{c_1}&<&M\alpha_2\log \rho +\mathcal{O}(1).
\end{eqnarray}
\begin{eqnarray}
&&2R_{c_1}+2R_{c_2}< \nonumber \\
&&\begin{cases}
\Big(M+M\alpha_1+(N-3M)\alpha_2 \Big)\log \rho +\mathcal{O}(1), &\text{ if } \alpha_2<\frac{\alpha_1}{2}\\
\Big(M+(N-2M)\alpha_1+(3M-N)\alpha_2 \Big)\log \rho +\mathcal{O}(1), &\text{ if } \alpha_2>\frac{\alpha_1}{2}.
\end{cases}
\end{eqnarray}
\begin{eqnarray}
3R_{c_1}+2R_{c_2}&<&  \Big(M+M\alpha_1+(N-2M)\alpha_2\Big)\log \rho +\mathcal{O}(1).
\end{eqnarray}
Then, we get the expression for the GDOF given as~\eqref{eq:gdof2}.
\end{enumerate}
\item Converse\\
We show each term of~\eqref{eq:gdof2} is the upper bound. The first term of~\eqref{eq:gdof2} is trivially an upper bound. Let us consider the third term. Let $S^n_{\mathcal{B},i}=\sum_{j \in \mathcal{B}}H_{ji}X^n_{j}+Z^n_i$. From the proof of Theorem~\ref{thm:detcap2}, side information to each receiver can be determined. 
Then, in Gaussian case,
\begin{subequations}
\begin{eqnarray}
&& 2n(R_1+R_2+R_3)\nonumber\\
&&\quad \leq 2I(X_1^n;Y_1^n)+2I(X_2^n;Y_2^n)+2I(X_3^n;Y_3^n)\\
&&\quad \leq I(X_1^n;Y_1^n,S^n_{3,2})+I(X_1^n;Y_1^n,S^n_{1,2})\nonumber\\
&&\qquad + I(X_2^n;Y_2^n,S^n_{1,3})+I(X_2^n;Y_2^n,S^n_{2,3})\nonumber\\
&&\qquad + I(X_3^n;Y_3^n,S^n_{2,1})+I(X_3^n;Y_3^n,S^n_{3,1}).
\end{eqnarray}
\end{subequations}
Consider now $I(X_1^n;Y_1^n,S^n_{3,2})+I(X_1^n;Y_1^n,S^n_{1,2})$ only.
\begin{subequations}
\begin{eqnarray}
&&I(X_1^n;Y_1^n,S^n_{3,2})+I(X_1^n;Y_1^n,S^n_{1,2})\nonumber\\
&&\quad =h(Y_1^n|S^n_{3,2})-h(S^n_{\{2,3\},1}|S^n_{3,2})+h(S^n_{1,2})\nonumber\\
&&\qquad +h(Y_1^n|S^n_{1,2})-h(S^n_{\{2,3\},1})+\mathcal{O}(1)\\
&&\quad \leq h(Y_1^n|S^n_{3,2})-h(S^n_{\{2,3\},1}|S^n_{3,2})-h(S^n_{3,2})+h(S^n_{3,2})+h(S^n_{1,2})\nonumber\\
&&\qquad + h(S^n_{\{2,3\},1})+h(S^n_{1,1}|S^n_{1,2})-h(S^n_{\{2,3\},1})+\mathcal{O}(1)\\
&&\quad \leq h(Y_1^n|S^n_{3,2})-h(S^n_{\{2,3\},1})+h(S^n_{3,2})+h(S^n_{1,2})+h(S^n_{1,1}|S^n_{1,2})+\mathcal{O}(1).
\end{eqnarray}
\end{subequations}
By evaluating other terms in similar ways, we get 
\begin{subequations}
\begin{eqnarray}
&&2n(R_1+R_2+R_3)\nonumber\\
&&\quad \leq h(Y_1^n|S^n_{3,2})-h(S^n_{\{2,3\},1})+h(S^n_{3,2})+h(S^n_{1,2})+h(S^n_{1,1}|S^n_{1,2})\nonumber\\
&&\qquad + h(Y_2^n|S^n_{1,3})-h(S^n_{\{1,3\},2})+h(S^n_{1,3})+h(S^n_{2,3})+h(S^n_{2,2}|S^n_{2,3})\nonumber\\
&&\qquad + h(Y_3^n|S^n_{2,1})-h(S^n_{\{1,2\},3})+h(S^n_{2,1})+h(S^n_{3,1})+h(S^n_{3,3}|S^n_{3,1})\\
&&\quad = h(Y_1^n|S^n_{3,2})+ h(Y_2^n|S^n_{1,3})+ h(Y_3^n|S^n_{2,1})\nonumber\\
&&\qquad +h(S^n_{1,1}|S^n_{1,2})+h(S^n_{2,2}|S^n_{2,3})+h(S^n_{3,3}|S^n_{3,1})\nonumber\\
&&\qquad +h(S^n_{2,1})+h(S^n_{3,1})-h(S^n_{\{2,3\},1})\nonumber\\
&&\qquad +h(S^n_{1,2})+h(S^n_{3,2})-h(S^n_{\{1,3\},2})\nonumber\\
&&\qquad +h(S^n_{1,3})+h(S^n_{2,3})-h(S^n_{\{1,2\},3})+\mathcal{O}(1)\\
&&\quad \leq  nh(Y_1^G|S^G_{3,2})+ nh(Y_2^G|S^G_{1,3})+ nh(Y_3^G|S^G_{2,1})\nonumber\\
&&\qquad +nh(S^G_{1,1}|S^G_{1,2})+nh(S^G_{2,2}|S^G_{2,3})+nh(S^G_{3,3}|S^G_{3,1})\nonumber\\
&&\qquad +nh(S^G_{2,1})+nh(S^G_{3,1})-nh(S^G_{\{2,3\},1})\nonumber\\
&&\qquad +nh(S^G_{1,2})+nh(S^G_{3,2})-nh(S^G_{\{1,3\},2})\nonumber\\
&&\qquad +nh(S^G_{1,3})+nh(S^G_{2,3})-nh(S^G_{\{1,2\},3})+\mathcal{O}(1)\\
&&\quad = n(h(Y_1^G|S^G_{3,2})+ h(Y_2^G|S^G_{1,3})+ h(Y_3^G|S^G_{2,1}))+\mathcal{O}(1).
\end{eqnarray}
\end{subequations}
We eventually get 
\begin{eqnarray}
&&D_{sym}(\alpha_1,\alpha_2)\nonumber\\
&&\quad\leq \max\Big\{ \frac{M+M\alpha_1+(N-3M)\alpha_2}{2}, \frac{M+(N-2M)\alpha_1+(3M-N)\alpha_2}{2}            \Big\}.
\end{eqnarray}
The second term of~\eqref{eq:gdof2} can be evaluated similarly.
\end{enumerate}
\end{proof}
Note that the second term of~\eqref{eq:gdof2} comes from many-to-one bound, and hence converse for this bound would be proven in similar ways to the case of $\alpha_1<1$. This also implies that many-to-one bound of the case of $\alpha_1<1$ would be proven in more systematic, albeit less intuitive way given in the proof of Thereom~\ref{thm:gdof2}.\\
\indent 
It can be seen from~\eqref{eq:gdof2} that the GDOF can be $M$ which implies that the effect of interference is completely removed. In symmetric cases in~\cite{EtTsWa08, GoJa11, JaVi10}, the effect of interference is removed when interference is much stronger than the desired channel. Theorem~\ref{thm:gdof2} shows that it can happen even when some of interferences are weaker than the desired channel for asymmetric case.
\subsection{Case of $\alpha_2>1$}
\begin{theorem}
\label{thm:gdof3}
GDOF of Gaussian IC with $2M\leq N<3M$ and $\alpha_2>1$ is given as 
\begin{eqnarray}
\label{eq:gdof3}
&&d_{sym}(\alpha_1,\alpha_2)= \min \{ M,  \frac{N-2M+M\alpha_1+M\alpha_2}{3} \}.
\end{eqnarray}
\end{theorem}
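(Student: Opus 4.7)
The plan is to follow the template of Theorem~\ref{thm:gdof2} and the $\alpha_1<1$ case: Gaussian-codebook achievability mirroring the deterministic scheme of Theorem~\ref{thm:detcap3}, and a converse in which the side information is dictated by that same deterministic proof. Since only two terms appear in~\eqref{eq:detcap3}, I correspondingly produce only two matching bounds in the Gaussian case.

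\textbf{Achievability.} Because $\alpha_2>1$, every cross link is stronger than the intended link, so no private layer is needed (matching the absence of any $X_i$-level private codeword in the codebook construction of Theorem~\ref{thm:detcap3}). Each transmitter encodes one message at rate $R$ using an i.i.d.\ Gaussian codebook with covariance $I$, and each receiver jointly decodes all three messages. Standard typicality analysis produces two binding rate constraints: a single-user bound $R < \log|I + \rho H_{11}H_{11}^H| = M\log\rho + \mathcal{O}(1)$, and an all-three bound $3R < \log|I + \rho H_{11}H_{11}^H + \rho^{\alpha_2}H_{21}H_{21}^H + \rho^{\alpha_1}H_{31}H_{31}^H|$. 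Applying Lemma~\ref{lem:dim} under the ordering $\alpha_1>\alpha_2>1$ together with $2M\leq N<3M$, the three DOF contributions are $M\alpha_1$, $\min(M,N-M)\alpha_2 = M\alpha_2$, and $\min(M,N-2M) = N-2M$, yielding $3R\leq (M\alpha_1+M\alpha_2+N-2M)\log\rho+\mathcal{O}(1)$. Combined with the single-user bound this matches~\eqref{eq:gdof3}.

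\textbf{Converse.} The first term $d_{sym}\leq M$ is the trivial MIMO point-to-point cut $R_i \leq I(X_i^n;Y_i^n)\leq M\log\rho + \mathcal{O}(1)$. For the second I mimic the many-to-one converse of Theorem~\ref{thm:detcap3}. Let $S_{i,i}^n = H_{ii}X_i^n + Z_i^n$ be the Gaussian counterpart of the coarsest-layer side information $V_{ii}^n$. Starting from three copies of Fano's inequality I would upper-bound the right side by $\sum_{i=1}^{3}\bigl[I(X_i^n;Y_i^n) + 2\,I(X_i^n;Y_i^n,S_{i,i}^n)\bigr]$, exactly paralleling the split used in the deterministic proof. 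Expanding each mutual information as a difference of differential entropies, the auxiliary cross terms of the form $2h(S_{i,i}^n) - h(\cdot) - h(X_i^n)$ cancel up to an $\mathcal{O}(n)$ additive gap via the MIMO vector entropy-power inequality invoked already in the proof of Theorem~\ref{thm:gdof2}. What survives is $\sum_i h(Y_i^n)$; bounding each by the Gaussian differential entropy at i.i.d.\ identity-covariance input and evaluating via Lemma~\ref{lem:dim} then gives $d_{sym}\leq (M\alpha_1+M\alpha_2+N-2M)/3$.

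\textbf{Main obstacle.} The technical heart is the entropy-difference bookkeeping in the converse: one must show that the auxiliary terms involving $h(S_{i,i}^n)$ and $h(X_i^n)$ contribute only an $\mathcal{O}(n)$ gap rather than a $\log\rho$-scaling term. This is precisely the MIMO extension of the vector entropy-power inequality used in Theorem~\ref{thm:gdof2}, here applied to each receiver individually rather than coupled across receivers; once the reduction to $\sum_i h(Y_i^n)$ is complete, the prelog is pinned down by a direct application of Lemma~\ref{lem:dim}.
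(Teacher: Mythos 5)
Your proposal matches the paper's proof essentially step for step: the same unsplit Gaussian codebook with joint decoding and the two bounds evaluated via Lemma~\ref{lem:dim} for achievability, and the same converse decomposition $\sum_i\big[I(X_i^n;Y_i^n)+2I(X_i^n;Y_i^n,S_{i,i}^n)\big]$ reducing to $\sum_i h(Y_i^n)$ before Gaussian maximization. The one bookkeeping quibble is that in the paper the leftover terms $2h(S_{i,i}^n)-h(S_{i,j}^n)-h(S_{i,k}^n)$ are dropped simply because $\alpha_1,\alpha_2>1$ makes the cross-link observations of $X_i$ stronger than the direct one, while the vector entropy-power inequality is reserved for the coupled terms $h(S_{j,i}^n)+h(S_{k,i}^n)-h(S_{\{j,k\},i}^n)$; you attribute both cancellations to the EPI.
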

\begin{proof}
\begin{enumerate}
\item Achievability\\
\begin{enumerate}
\item Codebook generation and encoding\\
Transmitter $i$'s message $W_{i}$ is encoded using a Gaussian codebook with rate $R$ and covariance $I$.
\item Decoding\\
Similar to deterministic model, all messages are decoded jointly. 
\item Error analysis\\
We obtain bounds on achievable rate by analyzing error events. By proceeding similarly to the previous cases, we get the following bounds which are relevant for GDOF. 
\begin{eqnarray}
R&<&M\log \rho +\mathcal{O}(1).
\end{eqnarray}
\begin{eqnarray}
3R&<&  \Big(N-2M+M\alpha_1+M\alpha_2\Big)\log \rho +\mathcal{O}(1).
\end{eqnarray}
Then, we get the expression for the GDOF given as~\eqref{eq:gdof3}.
\end{enumerate}
\item Converse\\
We show each term of~\eqref{eq:gdof3} is the upper bound. The first term of~\eqref{eq:gdof3} is trivially an upper bound. Let us consider the second term. Let $S^n_{\mathcal{B},i}=\sum_{j \in \mathcal{B}}H_{ji}X^n_{j}+Z^n_i$. From the proof of Theorem~\ref{thm:detcap3}, side information to each receiver can be determined. 
Then, in Gaussian case,
\begin{subequations}
\begin{eqnarray}
&& 3n(R_1+R_2+R_3)\nonumber\\
&&\quad \leq 3I(X_1^n;Y_1^n)+3I(X_2^n;Y_2^n)+3I(X_3^n;Y_3^n)\\
&&\quad \leq I(X_1^n;Y_1^n)+2I(X_1^n;Y_1^n,S^n_{1,1})\nonumber\\
&&\qquad + I(X_2^n;Y_2^n)+2I(X_2^n;Y_2^n,S^n_{2,2})\nonumber\\
&&\qquad + I(X_3^n;Y_3^n)+2I(X_3^n;Y_3^n,S^n_{3,3}).
\end{eqnarray}
\end{subequations}
Consider now $I(X_1^n;Y_1^n)+2I(X_1^n;Y_1^n,S^n_{1,1})$ only.
\begin{subequations}
\begin{eqnarray}
&&I(X_1^n;Y_1^n)+2I(X_1^n;Y_1^n,S^n_{1,1})\nonumber\\
&&\quad =h(Y_1^n)-h(S^n_{\{2,3\},1})+2h(S^n_{1,1})+\mathcal{O}(1)\\
&&\quad =h(Y_1^n)+h(S^n_{2,1})+h(S^n_{3,1})-h(S^n_{\{2,3\},1})+2h(S^n_{1,1})-h(S^n_{2,1})-h(S^n_{3,1})+\mathcal{O}(1).
\end{eqnarray}
\end{subequations}
By evaluating other terms in similar ways, we get 
\begin{subequations}
\begin{eqnarray}
&&3n(R_1+R_2+R_3)\nonumber\\
&&\quad \leq h(Y_1^n)+h(Y_2^n)+h(Y_3^n)+h(S^n_{2,1})+h(S^n_{3,1})-h(S^n_{\{2,3\},1})\nonumber\\
&&\qquad +h(S^n_{1,2})+h(S^n_{3,2})-h(S^n_{\{1,3\},2})+h(S^n_{1,3})+h(S^n_{2,3})-h(S^n_{\{1,2\},3})\nonumber\\
&&\qquad +2h(S^n_{1,1})-h(S^n_{1,2})-h(S^n_{1,3})+2h(S^n_{2,2})-h(S^n_{2,1})-h(S^n_{2,3})\nonumber\\
&&\qquad +2h(S^n_{3,3})-h(S^n_{3,1})-h(S^n_{3,2})+\mathcal{O}(1)\\
&&\quad \leq h(Y_1^n)+h(Y_2^n)+h(Y_3^n)+h(S^n_{2,1})+h(S^n_{3,1})-h(S^n_{\{2,3\},1})\nonumber\\
&&\qquad +h(S^n_{1,2})+h(S^n_{3,2})-h(S^n_{\{1,3\},2})+h(S^n_{1,3})+h(S^n_{2,3})-h(S^n_{\{1,2\},3})+\mathcal{O}(1)\\
&&\quad \leq nh(Y_1^G)+nh(Y_2^G)+nh(Y_3^G)+nh(S^G_{2,1})+nh(S^G_{3,1})-nh(S^G_{\{2,3\},1})\nonumber\\
&&\qquad +nh(S^G_{1,2})+nh(S^G_{3,2})-nh(S^G_{\{1,3\},2})+nh(S^G_{1,3})+nh(S^G_{2,3})-nh(S^G_{\{1,2\},3})+\mathcal{O}(1)\\
&&\quad = n(h(Y_1^G)+ h(Y_2^G)+ h(Y_3^G))+\mathcal{O}(1).
\end{eqnarray}
\end{subequations}
We eventually get 
\begin{eqnarray}
&&D_{sym}(\alpha_1,\alpha_2) \leq \frac{N-2M+M\alpha_1+M\alpha_2}{3}.
\end{eqnarray}
\end{enumerate}
\end{proof}
From~\eqref{eq:gdof3}, we can directly conclude that the effect of interference is removed, i.e, the GDOF is $M$, if $\alpha_1+\alpha_2>\frac{5M-N}{M}$ which corresponds to very strong interference. It can be also seen that setting $\alpha_1=\alpha_2=\alpha$ exactly recovers the result in the symmetric case in~\cite{GoJa11} for $\alpha>1$.
\subsection{Further interpretation of the GDOF}
Figure~\ref{fig:gdof} describes the GDOF region of 3-user MIMO GIC with $M=1$ and $N=2$ for $0<\alpha_1,\alpha_2<2$. It can be seen that the GDOF region consists of several faces corresponding to terms in~\eqref{eq:gdof},~\eqref{eq:gdof2},~\eqref{eq:gdof3}. 
\begin{figure}[ht]
\begin{center}{
 \includegraphics[width=0.9\textwidth]{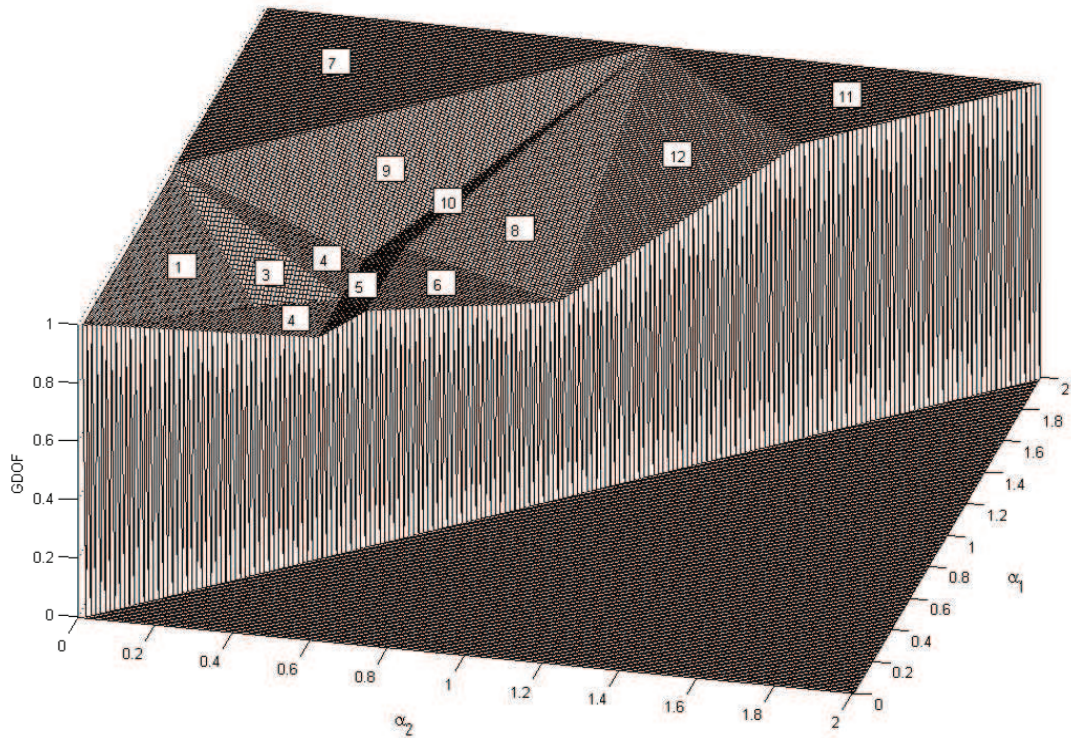}  
 }\end{center}
  \caption{The GDOF region of 3-user $1 \times 2$ GIC}
  \label{fig:gdof}
\end{figure}
For ease of exposition, each face is labeled in Figure~\ref{fig:gdof} by the order of appearance of the corresponding term in~\eqref{eq:gdof},~\eqref{eq:gdof2},~\eqref{eq:gdof3}. For example,~\eqref{eq:gdof} has six terms in total, and faces corresponding to these terms are labeled from 1 to 6. \eqref{eq:gdof2} has four terms, and faces corresponding to these terms are labeled from 7 to 10. \eqref{eq:gdof3} has two terms, and faces corresponding to these terms are labeled from 11 to 12. Let the face with label $i$ be called the face $i$. The face 2 intersects with the faces 1,3, and 4, and it does not appear in Figure~\ref{fig:gdof} due to the viewing angle. Note that three triangular faces of 1,3 and 4 form a pit in the GDOF region.\\
\indent
The face 7 in Figure~\ref{fig:gdof} represents the GDOF of $M$. In this region, the message from the stronger interference can be decoded by treating everything else as noise. After decoding of the stronger interference, the receiver has enough dimension to resolve everything else. The face 11 also represents the GDOF of $M$. In this region, the messages from two interferences are jointly decoded by treating the desired signal as noise.  
\indent
In symmetric case, faces 4,6,11 and 12 are only active, and the intersection of them with the plane $\alpha_1=\alpha_2$ forms a line corresponds to the GDOF of 3-user symmetric SIMO GIC shown in~\cite[Figure 2]{GoJa11}. As mentioned earlier, the face 1 corresponds to the GDOF of IAN. An orthogonal scheme achieves the GDOF of $2/3$, and it is a lower bound of the GDOF region in Figure~\ref{fig:gdof} with equality at the points corresponding to the intersection of the faces 6,8, and 12 and the intersection of the faces 1,2 and 3.\\
\indent
Let us consider now weak interference regime which corresponds to faces 1 to 6 more carefully. There are three important factors in understanding the GDOF behavior. Faces 1 to 6 correspond to the rate bounds given in~\eqref{eq:gdof} at the receiver 1 as mentioned before. The first important factor is the rate terms involved in each of these bounds. The second important factor comes from Lemma~\ref{lem:dim}. From this lemma, it can be seen that the user whose messages have the weakest power cannot achieve the full DOF when receiver dimension is not enough. Depending on values of $\alpha_1$ and $\alpha_2$, the power of each user's messages changes as well as the user with the weakest power. This is the second important factor. Table~\ref{tab:weak} describes these factors for bounds correspond to faces 1 to 6 at the receiver 1. 
\begin{table}[!t]\footnotesize
\renewcommand{\arraystretch}{1.3}
\caption{Table of important factors in each bound at the receiver 1}
\label{tab:weak}
\centering
\begin{tabular}{|c|c|c|}
 \hline
   \textbf{Bound} & \textbf{Involved rate terms} & \textbf{The smallest power exponent and the corresponding user}  \\
  \hline
  \hline
  1 & \multirow{3}{*}{$R_{1p}, R_{2c_1}, R_{3c_2}$  } & $\alpha_2$ (user 2)\\
  \cline{1-1}\cline{3-3}
  2 && $\alpha_1-\alpha_2$ (user 3)\\
  \cline{1-1}\cline{3-3}
  3 &&  $1-\alpha_1$ (user 1)\\
  \hline
  4& \multirow{2}{*}{$R_{1p}, R_{1c_2}, R_{2c_1}, R_{3c_1}, R_{3c_2}$  } & $\alpha_2$ (user 2)\\
  \cline{1-1}\cline{3-3}
  5&& $1-\alpha_2$ (user 1)\\
  \hline
  6& $R_{1p}, R_{1c_1}, R_{1c_2}, R_{2c_1}, R_{3c_1}, R_{3c_2}$ & $\alpha_2$ (user 2)\\
  \hline 
\end{tabular}
\end{table}
The third important factor is the signal strength of each user's messages $W_{c_1}, W_{c_2}, W_{p}$ at the receiver 1, and they are described in Table~\ref{tab:str}.
\begin{table}[!t]\footnotesize
\renewcommand{\arraystretch}{1.3}
\caption{the signal strength of each user's messages at the receiver 1}
\label{tab:str}
\centering
\begin{tabular}{|c||c|c|c|}
 \hline
   \textbf{ \backslashbox{Message}{User}   } & \textbf{1} & \textbf{2} & \textbf{3} \\
  \hline
  \hline
  $R_p$ & $\rho^{1-\alpha_1}$ & \multirow{2}{*}{1} & 1\\
  \cline{1-2}\cline{4-4}
  $R_{c_2}$ & $\rho^{1-\alpha_2}-\rho^{1-\alpha_1}$ & & $\rho^{\alpha_1-\alpha_2}-1$\\
  \hline
  $R_{c_1}$ & $\rho-\rho^{1-\alpha_2}$ & $\rho^{\alpha_2}-1$ & $\rho^{\alpha_1}-\rho^{\alpha_1-\alpha_2}$\\
  \hline 
\end{tabular}
\end{table}
From Figure~\ref{fig:gdof}, it can be seen that bound 6 is active when $\alpha_2$ is larger, and bounds 4 and 5 is active when $\alpha_2$ is smaller. This can be explained by universal common message $W_{c_1}$. From Table~\ref{tab:weak}, it can be seen that bound 6 includes $R_{1c_1}$ while bounds 4 and 5 do not. When $\alpha_2$ is small, the decodability of $W_{2c_1}$ and $W_{3c_1}$ becomes limiting factor as suggested in Table~\ref{tab:str}. As $\alpha_2$ increases, this decodability increases such that decodability of $W_{1c_1}$ gets important. This is similar to phenomena observed in symmetric cases in~\cite{EtTsWa08, GoJa11}. Note that bounds 1 to 3 occurs only if $\alpha_2<0.5$. Figure~\ref{fig:weakgdof} describes the GDOF with $\alpha_2=0.2$ with labels which match with corresponding faces in Figure~\ref{fig:gdof}. 
\begin{figure}[ht]
\begin{center}{
 \includegraphics[width=0.5\textwidth]{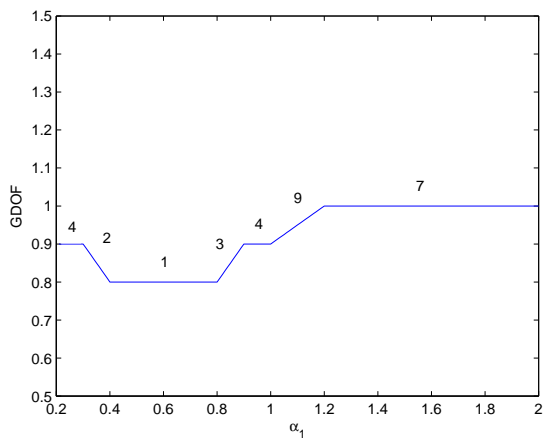}  
 }\end{center}
  \caption{The GDOF of 3-user $1 \times 2$ GIC with $\alpha_2=0.2$}
  \label{fig:weakgdof}
\end{figure}
It can be seen that bound 4 is first active until $\alpha_1$ reaches certain level. This can be again explained by $W_{c_1}$. From Table~\ref{tab:weak}, it can be seen that bounds 1 to 3 do not involve $R_{3c_1}$ while bound 4 does. As suggested in Table~\ref{tab:str}, increasing $\alpha_1$ increases decodability of $W_{3c_1}$, and decodability of only $W_{2c_1}$ becomes limiting factor at some point. Interestingly, bound 4 is active again when $\alpha_1$ is larger. This can be explained by $W_{c_2}$. From Table~\ref{tab:weak}, it can be seen that bounds 1 to 3 do not involve $R_{1c_2}$ while bound 4 does. As suggested in Table~\ref{tab:str}, increasing $\alpha_1$ increases decodability of $W_{3c_2}$, and decodability of $W_{1c_2}$ becomes limiting factor together with $W_{3c_2}$ at some point.\\
\indent
Let us consider bound 6. Increasing $\alpha_2$ with fixing $\alpha_1$ decreases the GDOF. As seen in Table~\ref{tab:weak}, the weakest user in this bound is user 2 with power exponent $\alpha_2$. Increasing $\alpha_2$ increases decodability of $W_{2c1}$ while decreasing decodability of $W_{1c2}$ and $W_{3c2}$. Since $W_{2c1}$ does not achieve the full DOF, rate gain of $W_{2c1}$ cannot compensate rate loss of $W_{1c2}$ and $W_{3c2}$. On the other hand, Increasing $\alpha_1$ with fixing $\alpha_2$ does not change the GDOF. In this case, changing $\alpha_1$ does not affect the weakest user's power, and hence rate gain and loss occur for users which achieve full DOF. This maintains the balance. \\
\indent
Opposite phenomenon occurs in bound 5. Increasing $\alpha_2$ with fixing $\alpha_1$ increases the GDOF. Increasing $\alpha_2$ increases decodability of $W_{2c1}$ while decreasing decodability of $W_{1c2}$ and $W_{3c2}$. As seen in Table~\ref{tab:weak}, the weakest user in this bound is user 1 with power exponent $1-\alpha_2$. Since user 1 does not achieve the full DOF, rate gain of user 2 exceeds rate loss of user 1. Increasing $\alpha_1$ with fixing $\alpha_2$ does not change the GDOF from the similar reason to the case of bound 5.\\
\indent
Things get a little trickier when $\alpha_2<0.5$. Consider Figure~\ref{fig:weakgdof}. On bound 4, the weakest user is user 2 with power exponent $\alpha_2$, and hence increasing $\alpha_1$ maintains balance. On bound 2, however, the weakest user is user 3 with power exponent $\alpha_1-\alpha_2$. Hence, increasing $\alpha_1$ increases signal power of the user who does not achieve the full DOF, and this results in decrease of the GDOF. Once it hits bound 1 which corresponds to IAN, increasing $\alpha_1$ does not affect the GDOF. On bound 3, the weakest user is user 1 with power exponent $1-\alpha_1$. Increasing $\alpha_1$ decreases decodability of user 1 who does not achieve the full DOF while increasing decodability of other user's messages. This should result in increase of the GDOF.
\section{Conclusion}
\label{sec:con}
In this paper, the GDOF of 3-user MIMO GIC is characterized. As conjectured in~\cite{GoJa11}, Han-Kobayashi or Etkin-Tse-Wang-like message splitting achieves the GDOF although generalization of multiple message splitting is required. Three messages per transmitter suffice in 3-user case, and it implies that $K$ message splitting would achieve the GDOF of any $K$-user MIMO GIC which satisfies the condition of $M(K-1)\leq N$. Note that the GDOF result obtained in this paper essentially implies $\mathcal{O}(1)$ gap to the capacity which is finite, but the exact gap to the capacity cannot be computed. In finite SNR regime in which the exact constant gap is desirable, degraded nature does not exist in the channel anymore, and this means that more message splitting would be required to establish such result. \\
\indent
We define \textit{partially asymmetric} GIC which yields valuable insights with manageable amount of computation on asymmetric GIC with more than two users which has not been well studied in literature. The form of the GDOF region gives interesting interpretation on interactions among different interferences for the first time in information theoretic study on interference channel. The methodology in this paper would achieve the GDOF of any MIMO GIC which satisfies $M(K-1)\leq N$ probably through cumbersome analysis. \\
\indent
The deterministic model is used in this paper as in~\cite{GoJa11} to facilitate easier analysis for the Gaussian case. The most important benefit is the systematic way of determining side information for converse. In Gaussian case, however, the proof of converse is not as simple as in the deterministic case due to the fact that the channel output becomes linear combination of channel inputs. Asymmetric nature of the channel requires an approximation which implies non-trivial generalization of the symmetric case. This approximation using vector entropy inequality can possibly be used for general $K$-user MIMO GIC, but it relies on the fact that the GDOF is not affected by finite gap to the capacity, and hence, may not be used for analysis in finite SNR regime. This implies that a better upper bounding technique is required to obtain the exact constant cap to the capacity.


\bibliographystyle{IEEEtran}
\bibliography{bae}
\end{document}